\newcommand{\A}{\mathcal{A}}
\newcommand{\K}{\mathcal{K}}
\newcommand{\QL}{\mathcal{QL}}
\newcommand{\R}{\mathcal{R}}
\renewcommand{\S}{\mathcal{S}}
\newcommand{\D}{\mathcal{D}}
\newcommand{\Q}{\mathcal{Q}}
\newcommand{\splitatcommas}[1]{%
    \begingroup
    \begingroup\lccode`~=`, \lowercase{\endgroup
        \edef~{\mathchar\the\mathcode`, \penalty0
            \noexpand\hspace{0pt plus 1em}}%
        }\mathcode`,="8000 #1%
    \endgroup
}
\newtheorem{assumption}{Assumption}
\newcounter{protocol}
\begin{document}
\title{Establishing shared secret keys on quantum line networks: protocol and security}

\author{Mina Doosti\inst{1} \and Lucas Hanouz\inst{2} \and Anne Marin\inst{2} \and Elham Kashefi\inst{1,3} \and Marc Kaplan\inst{2}\footnote{The first two authors contributed equally to the work. The last author supervised the project.}}
\institute{
School of Informatics, University of Edinburgh, 10 Crichton Street, Edinburgh EH8 9AB, United Kingdom
\and
VeriQloud, 9bis rue Abel Hovelacque, 75013 Paris, France
\and
Laboratoire d’Informatique de Paris 6, CNRS, Sorbonne Universit\'e, 4 Place Jussieu, 75005 Paris, France
}
\maketitle

\begin{abstract}
We show the security of
multi-user key establishment on a single line of quantum communication.
More precisely, we consider a quantum communication architecture where the qubit generation and measurement happen at the two ends of the line,
whilst intermediate parties are limited to single-qubit unitary transforms.
This network topology has been previously introduced
to implement quantum-assisted secret-sharing protocols for classical data, as well as the key establishment, and secure computing.

This architecture has numerous advantages. The intermediate nodes are only using simplified hardware, which makes them easier to implement. Moreover, key establishment 
between arbitrary pairs of parties in the network does not require key routing through intermediate nodes.
This is in contrast with quantum key distribution (QKD) networks for which non-adjacent nodes
need intermediate ones to route keys, thereby revealing these keys to intermediate parties and consuming previously established
ones to secure the routing process.

Our main result is to show the security of key establishment on quantum line networks.
We show the security using the framework of abstract cryptography.
This immediately makes the security composable, showing that the keys can be used
for encryption or other tasks.

\end{abstract}

\section{Introduction}

The first discovered application of quantum communication was Quantum Key Distribution (QKD),
published in 1984 by Bennett and Brassard~\cite{BB84}.
It shows that quantum resources can achieve unconditionally secure key establishment, a task that is provably impossible using only classical resources.
With the BB84 protocol, no matter what effort eavesdroppers spend, they will not learn a single bit of the secret key established by the protocol.

The possibility of unconditional security with quantum resources is a neat mathematical statement. However, building a real-world security infrastructure using quantum key distribution requires considering many details in addition to a quantum communication channel.
Firstly, it requires authenticated communication channels, which might need, in order to achieve
unconditional security, a short shared secret key in the first place. Secondly, the specifications of the original protocols are hardly met with
real-world's hardware. Last but not least, the advantage of QKD is harder to characterize when considering topologies beyond point-to-point communications.

Despite these fair criticisms, the scientific importance of quantum key distribution led to its unfortunate
identification with the wider area of quantum communication.
While there are lots of challenges to improving the performances of QKD protocols, this approach to quantum communication has narrowed down the concept of quantum networks to a collection of nodes running key establishment protocols on each edge.

This quantum communication architecture bears a paradox which currently strongly limits the use-cases for quantum communication.
This paradox consists of two contradictory features of quantum networks. On the one hand, fibre losses strongly limit the achievable distance of quantum communication.
On the other hand, since each edge of the network requires dedicated quantum hardware, the scaling of its cost is super-linear in its size. This tends to limit the deployment of quantum hardware to ``critical'' sites which may be far from each other. 

The two approaches that are usually pursued to solve this paradox are to either extend the range of quantum communication or to decrease QKD's cost by improving the underlying hardware. 
Regarding the distance, the standard solution is to use so-called trusted nodes to increase the range. These intermediate nodes are in charge of routing encryption keys to distant parties and consequently, get a clear view of the information they are routing.
Therefore, the trust in trusted nodes does not refer to cryptography but rather stands on physical and social grounds i.e. physical protection from a potential eavesdropper.
Other approaches to increase the range include the use of satellites, but their cost also limits them to critical infrastructure. A completely different option lies in the development of quantum repeaters, which will lead to a change of paradigm for quantum networks when available. They are, however, much further away in terms of technology readiness.

In this work, we consider a different approach to scaling quantum networks.
We consider quantum line networks, called \emph{Qlines} in the rest of this paper.
A Qline consists in an initial node that can generate single qubit states and a final node that can measure them. In between, intermediate nodes only have the ability to perform single-qubit transforms. 
This architecture has the following features:
\begin{itemize}
\item The intermediate nodes only use very limited hardware and can be implemented with standard telecom components (i.e. optical modulators).
\item Any pair of nodes on the network can establish a shared secret key.
Unlike  QKD networks, this key establishment does not require routing between
non-adjacent nodes, avoiding consuming keys for the sake of routing.
\item Since there is no need for key routing, intermediate parties never learn the keys of other parties. 
Compared to the QKD networks, this drastically lowers the trust assumptions and enhances the security of the network.
\end{itemize}

Overall, Qline introduces a new tradeoff between security, performance and cost, for key establishment. For this reason, it can potentially address use-cases that are currently out of range of the standard QKD network approach.

Architectures similar to Qline architecture has initially been introduced for quantum-assisted secret sharing~\cite{HBB99}. This task consists, given some secret data $d$ and $n$~parties, in the establishment of $n$ secret shares $s_1, \ldots, s_n$ distributed among the $n$ parties, and such that $d$ can be reconstructed from any subset of $k$ shares, but not from subsets of $k-1$ shares. The integer $k$~is called the threshold, and with the parameters used above, the scheme is usually referred to as an $(n,k)$-secret sharing scheme.
% $k_1 \oplus \ldots \oplus k_n =0$ where $\oplus$ denotes the bitwise $xor$ operation.

The classical version of secret sharing has been discovered independently by Shamir and Blakley in 1979~\cite{S79,B79}. Several years later, Hillery, Buzek and Berthiaume
showed that it was possible to have an $(n,n)$ secret sharing scheme by distributing GHZ entangled states between the parties involved~\cite{HBB99}. 
Later, it was shown that the same result could be obtained using only single-qubit states, with participants on a line successively transforming the qubits~\cite{STWB04}. Formally, the network topology that it considers is similar to a Qline.

Quantum-assisted secret sharing\footnote{In the literature, the task is usually referred to as \emph{Quantum secret sharing}. In our view, the name should reflect the fact that it is for classical secrets, and assisted by quantum communication.} was then extended in several different directions, such as sharing quantum secrets~\cite{CDL99}, using continuous-variable states, or multi-level states~\cite{XG13}. More importantly for us, the underlying architecture has been shown to allow other protocols. In particular, it allows any pair of parties to establish a shared secret key, like in standard QKD~\cite{GELL15}.
In this work, we use a similar strategy for key establishment, which is to first perform a secret sharing protocol and then reveal some information.

Beyond QKD, Qline has also been used for quantum-assisted secure multiparty computing~\cite{CPEW17}. In this task, several parties compute a function without revealing information on their inputs. Secret sharing is an example of such a task, but it is interesting to note that other tasks have been discovered and implemented. This seems to indicate that there are still more Qline protocols to be discovered.

The main contribution of our work is to provide a complete security proof of the key establishment protocol over Qline.
Previously, a security proof has been provided for quantum-assisted secret sharing~\cite{KXHA17}. The context, however, is rather different. In multiparty computing in general, each party wants to protect against malicious players that are trying to gain access to more information than needed to solve the task. In QKD, the security guarantee is only against eavesdropping.
This allows us to rely on standard tools
for QKD security, in particular the \emph{abstract cryptography} framework~\cite{AbstractCrypto}.

While the security of key exchange over Qline may seem obvious, providing formal proof has numerous advantages. Firstly, the main argument to use QKD is that, unlike any classical key establishment protocol, it provides provable security. Key establishment over Qline provides the same level of security but claiming it without a formal proof would be paradoxical. Secondly, the rich history of QKD security proofs has shown that resting on intuition may be misleading~\cite{KRBM07}. Last but not least, using the abstract cryptography framework immediately makes our proof composable, 
allowing to combine the keys established over Qline with authentication, encryption or other key encapsulation mechanisms~\cite{DHP20}.

The content of the paper is the following. 
In Section~\ref{sec:prelim}, we introduce the notations for quantum information processing and the abstract cryptography framework.
In section~\ref{sec:protocol}, we describe the architecture of Qline and introduce the quantum-assisted secret sharing and key establishment protocol. In Section~\ref{sec:proof}, we formally prove the security of the protocol.

\section{Preliminaries}
\label{sec:prelim}
\subsection{Quantum information processing for key establishment}
In this section, we present the basics of quantum information processing required to understand our results. Further information can be found in standard textbooks~\cite{NC02}.

We consider quantum systems whose description is a vector in a Hilbert space $\mathcal H$ of dimension $d\in \mathbb N$. We use the \emph{ket} notation $\ket \phi$ to denote those states. For $d=2$, the quantum systems are called \emph{qubits}. The following states are usually called the BB84 states:

\[
\ket 0 = \left ( \begin{array}{c} 0 \\1 \end{array}\right ),
\ket 1= \left(\begin{array}{c} 1 \\0 \end{array}\right),
\ket + =\frac 1 {\sqrt 2} \left(\begin{array}{c} 1 \\1 \end{array}\right),
\ket - =\frac 1 {\sqrt 2} \left(\begin{array}{c} 1 \\-1 \end{array}\right).
\]

Systems composed of multiple qubits are described using tensor products:
$\bigotimes_{i=1}^N \ket {x_i}=\ket {x_1}\otimes...\otimes \ket {x_N} \in \mathcal H_1 \otimes \ldots \otimes \mathcal H_N$. For brevity, we usually omit the tensor product sign $\otimes$ when using multiple qubit systems.

State transformations are described by unitary matrices. In particular, the computational basis formed by $\{\ket0, \ket 1\}$ is mapped to the Hadamard basis $\{\ket +, \ket - \}$ through the Hadamard transform $H$.

Our key establishment protocol uses \emph{rotations around the Y axis}. More precisely, we will use
\[
R_Y(\theta) = e^{i\theta/2} \begin{bmatrix} \cos(\theta/2) & -\sin(\theta/2) \\ \sin(\theta/2) & \cos(\theta/2) \end{bmatrix}\]
In particular, we consider the following angles:
\begin{flalign*}
R_Y(\pi) &= Y = i \begin{bmatrix} 0 & -1 \\ 1 & 0 \end{bmatrix},\\
R_Y(\pi/2) &= \sqrt{Y} = \frac{1+i}{2} \begin{bmatrix} 1 & -1 \\ 1 & 1 \end{bmatrix} = e^{i\pi/4} \frac{I - iY}{\sqrt{2}},\\
R_Y(3\pi/2) &= Y^{3/2} = \frac{1-i}{2} \begin{bmatrix} 1 & 1 \\ -1 & 1 \end{bmatrix} = e^{-i\pi/4} \frac{I + iY}{\sqrt{2}}.
\end{flalign*}

We first introduce projective measurements in the computational basis. Given a qubit $\ket \phi$, a measurement in the computational basis $\{\ket 0, \ket 1\}$ returns 0 or~1 with probability $\vert \braket{0\vert\phi} \vert ^2$ and $\vert \braket {1\vert \phi} \vert ^2$, respectively. A measurement in the Hadamard basis is performed by first applying $H$ to the state and then measuring in the computational basis.

Using these concepts, the BB84 protocol can be described in simple terms. Alice sends one of the BB84 states to Bob, who chooses a measurement basis at random between computational and Hadamard. If Alice and Bob have chosen the same basis, then the measurement's result is known only to them; otherwise, Bob has a purely random bit.

After repeating the operation a number of times, they can post-select the bits corresponding to a similar choice of basis. This phase is called \emph{key sifting}. Usually, it is followed by a post-processing phase that accounts for errors and potential eavesdropping. We don't describe those phases in full detail here since it is not required to prove our results.

We use the following notations for describing the random choices of Alice and Bob. Alice's choice of basis is encoded into a sequence of bits $(b^A_j)$. The
sequence $(b^A_i)_{i \in I}$ is the sub-sequence of $(b^A_i)$ restricted to indices $i$ such that $i\in I.$

While these notations are sufficient to fully describe the protocol, we need more general concepts to tackle the proof. More specifically, we describe the quantum systems with their \emph{density matrices} and use \emph{generalized measurements}.

A density matrix is an operator of the form $\rho = \sum_j p_j \ket{\psi_j}\bra{\psi_j}$, with $p_j>0$ and $\sum_j p_j =1$. Intuitively, this corresponds to preparing the state $\ket{\psi_j}$ with probability $p_j$.

A generalized measurement is defined as a set of operators $\{M^a\}_{a}$
such that $M^a \geq 0$ and $\sum_a M^a =Id$. By definition, the measurement of the state $\rho$
leads to output $a$ with probability $Tr[M^a \rho]$.

\subsection{The abstract cryptography framework}

We establish our security proofs using the \emph{Abstract Cryptography} framework. Initially introduced by Ueli Maurer and Renato Renner~\cite{AbstractCrypto}, this framework is designed to prove the composable security of cryptographic constructions while remaining as general as possible concerning security notions. In opposition to the traditional bottom-up approach where one would first define a low-level computational model (i.e Turing machines, classical communications) and then define notions like computational complexity and cryptographic security, the abstract cryptography framework adopts a top-down approach. It first defines at the highest level the composition of abstract systems and then proceeds downwards by introducing in each lower level only the minimal necessary specifications.

In this framework, systems are defined as abstract objects with \textit{interfaces} where all their potential inputs and outputs are defined. The overall behaviour of the system, that is, the function mapping inputs to outputs, 
is also part of its definition. 

Systems can be composed, either in parallel or sequentially.
The parallel composition of two systems $\R$ and $\S$ is a new system denoted $R||S$
with the interfaces of both sub-systems. It simply describes the fact that these systems are put side by side and seen as a whole, unique system. The behaviour of the composed system is naturally defined from the independent behaviours of the sub-systems (c.f Figure~\ref{fig:def-composition}).

\begin{figure}[!ht]
    \centering 
    \includegraphics[scale=0.50]{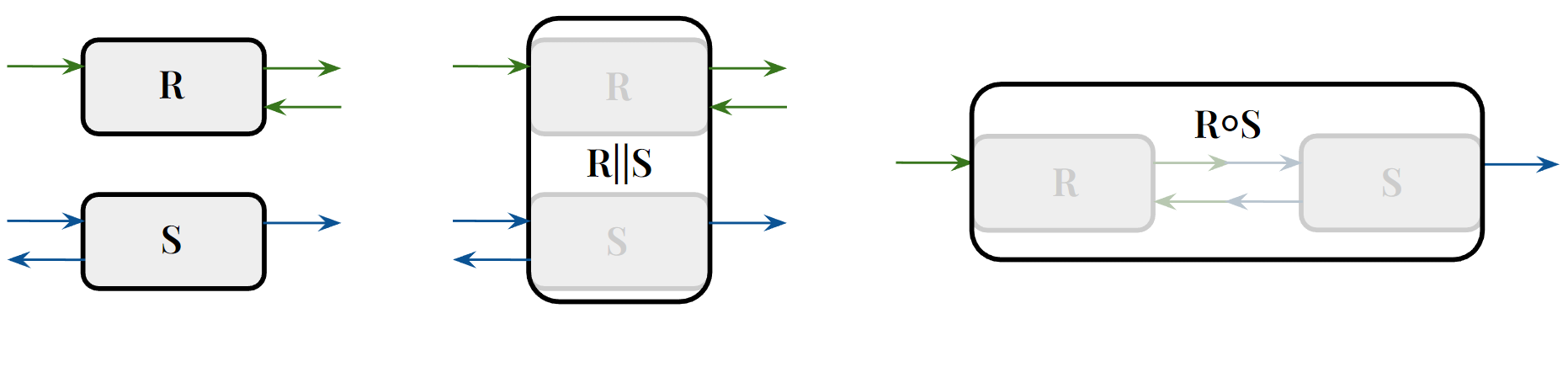}
    \caption{\centering  Composition of abstract systems}
    \label{fig:def-composition}
\end{figure}

The sequential composition describes the fact that the outputs of a system can be used as input by other systems. For instance,
two systems $\R$ and $\S$ can be sequentially composed \textit{at the interfaces $i_R$ of $\R$ and $j_S$ of $\S$} if each input (respectively output) of these interfaces can be associated with a unique output (respectively input) of the other interface. Since in this work, it will always be clear how and at which interfaces a sequential composition occurs, we denote it  $R\circ S$ or simply $RS$ without specifying the interfaces $i_R$ and $j_S$. The resulting system has all the interfaces of both sub-systems except from $i_R$ and $j_S$. The behaviour of the composed system is naturally defined from the behaviour of the sub-systems, knowing that the inputs at the interfaces $i_R$ and $j_S$ are the outputs of the other interface.

In this setting, the security of a cryptographic scheme is  defined as the ``closeness" of that system to its ideal and flawless, imaginary version. This closeness is measured using a pseudo-metric $\delta$ which is an \textit{a priori} unspecified function only assumed to verify the standard properties: identity, symmetry and triangle inequality.

A protocol $P$ is called \textit{$\epsilon$-secure} if there exists a system $\sigma$ called \textit{simulator} such that $\delta(P, \tilde P \circ \sigma) \leq \epsilon$ with $\tilde P$ the ideal version of the protocol. We also denote it $P \approx_\epsilon \tilde P\circ \sigma$. The definition of security for a cryptographic protocol thus depends on two factors: the definition of the ideal version of the protocol, and the pseudo-metric chosen to measure the distance to that ideal version.

In this work, we use the same pseudo-metric as in standard QKD security proofs~\cite{rennerabstract},
namely the \textit{distinguishing} pseudo-metric. The pseudo-metric $\delta$ between two systems is defined as the maximum probability for a computationally unbounded entity called a \textit{distinguisher} to distinguish the systems (over all possible distinguishers). The distinguisher has control over all inputs and access to all outputs of the systems.

This pseudo-metric leads to a composable definition of security, meaning that the composition of secure systems remains secure.
This comes from theorem~1 of~\cite{AbstractCrypto}.
In particular, this means that an $\epsilon$-secure system
can replace its ideal version in any setting with no discernible effect for adversaries with abilities comparable to the distinguisher, except with probability $\epsilon$ (see~\cite{AbstractCrypto}, theorem $2$).
% The fact that the security definition comes from the \textit{advantage} of a distinguisher in a real-world/ideal world indistinguishability game makes the definition of security analogous to many other often used in other frameworks.

Typically, the cryptographic security of a protocol is proved by
performing the following steps:
 \begin{enumerate}
     \item Model the studied protocol as a system $P$. Each interface contains the inputs and outputs used by the entities operating the protocol. In particular, an \textit{outer} (or \textit{adversarial}) interface describes what external entities can observe and how they can interact with the protocol (e.g for a communication channel, listening and tampering with the communications).
     
     \item Define an ideal system $\tilde P$ which satisfies the desired security properties. $\tilde P$ is not a model of an actual object. It simply defines the desired functionality. The security is often captured by the fact that the \textit{outer} interface contains few or no inputs and outputs for external entities to interact with the system. 
     
     \item Define a \textit{simulator} $\sigma$. It is a system that plugs into the outer interface of the ideal construction and instead provides the same outer interface for external entities as system $P$. 
     % The later interface should match the outer interface of $P$. This way, the two systems $\tilde P \circ \sigma$ and $P$ have the same interfaces.
     The simulator is also required to emulate the behaviour of the outer interface of $P$ such that $\tilde P \circ \sigma$ and $P$ are close enough under the pseudo-metric $\delta$.
     
     \item Finally, prove $P \approx_\epsilon \tilde P \circ \sigma$ for some $\epsilon >0$.
     %(i.e $\delta(P, \tilde P \circ \sigma) \leq \epsilon$).
     Using the distinguishing metrics, it amounts to proving that no distinguisher can distinguish $P$ from $\tilde P \circ \sigma$, except with probability~$\epsilon$.
 \end{enumerate}

\section{Quantum communication with a single communication channel}
\label{sec:protocol}

% A Quantum Key Distribution network is formed by a number of nodes
% and connections between pairs of nodes.
% Each connected pair can
% run a complete QKD protocol. In such a network, adding a new node requires
% a complete set of QKD hardware (for single qubit generation and measurement)
% and to connect the new node to a number of previous
% ones.

% We propose a simplified architecture in which all nodes are on a single quantum communication fibre.
% We call this architecture a \emph{Qline}. 
% Most nodes in this architecture only have limited capabilities, which leads to a more
% compact, cheaper device
% than in traditional QKD network. Despite that simplicity, a secret key can be
% established between arbitrary pair of nodes on the Qline.

\subsection{Description of the Qline architecture}
In Qline, the nodes of the network are connected on a single quantum communication channel.
The first party is called Alice and the last one is called Bob. The intermediate ones are called the Charlies, usually followed by a number: Charlie 1, Charlie 2, etc.
Each name refers to a specific role in the network.
\begin{itemize}
\item Alice can generate any single-qubit quantum state,
\item Bob can measure single-qubit quantum states in any base,
\item The Charlies can apply arbitrary single-qubit unitaries on incoming qubits.
\end{itemize}
These different capabilities translate into different hardware.
While Alice may need some single-photon source and Bob some single-photon detector, the Charlies only need to modulate the photons that they receive and send. For instance, if qubits are encoded in the polarization of the photons, the Charlies only need polarisation modulators.

\begin{figure}[ht]
\begin{center}
%\shorthandoff{:;!?}
\begin{tikzpicture}
	\fill[gray] 	(-0.5,0) arc (270:90:0.5cm) ;
	\fill[gray]	(-0.6,0) rectangle (0,1);
	\draw	(0,0.5) -- (1.5,0.5) ;
	\fill[gray] 	(1.5,0) rectangle (2.5,1);
	\draw	(2.5,0.5) -- (4,0.5) ;
	\fill[gray] 	(4,0) rectangle (5,1);
	\draw	(5,0.5) -- (6.5,0.5) ;
	\fill[gray]	(6.5,0) rectangle (7.1,1);
	\fill[gray] 	(7,1) arc (90:-90:0.5cm) ;
	\node at (-0.5,-0.3) {Alice};
	\node at (7,-0.3) {Bob};
	\node at (2,-0.3) {Charlie 1};
	\node at (4.5,-0.3) {Charlie 2};
\end{tikzpicture}
\caption{A Qline with four nodes}
\end{center}
\end{figure}
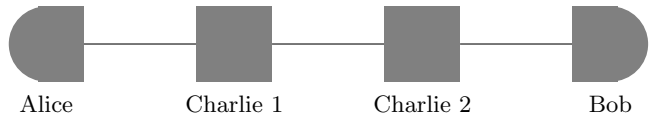

Compared to a standard QKD network with a similar topology, the Qline requires very little hardware. Therefore, the cost per node in the network scales is much better than for point-to-point QKD networks that require each node to be equipped with single photon generators and detectors.

The simplicity of the Qline implies some features that distinguish them from QKD networks. The most important feature is that  Qline can enable key exchange among participants but unlike QKD networks, there is no need for a trusted node set-up assumption. For this reason, the Qline is better suited for local and metropolitan area networks, rather than wide ones. On the other hand, the Qline is fully compatible with QKD networks, and it can easily be integrated into a larger-scale network.

\subsection{Secret sharing and key establishment on Qline}
We introduce the protocol for three parties, but it can be trivially generalized. The operations are described at the qubit level. 
%In particular,
%the rotations applied by the intermediate Charlie node are:

We now describe the Qline protocol for key exchange. As for QKD, the key establishment consists of a quantum communication phase followed by a classical post-processing phase. We present here only the quantum communication and sifting phase, which allows the parties to establish raw keys. In practice, it should be followed with error correction and privacy amplification. Since the raw keys established on Qline are of the same form as the ones established by QKD, the post-processing is completely similar and we don't repeat those steps here.

\subsubsection{Raw key exchange on Qline.}\label{sec:key-exchange-def}
%We consider a three party Qline architecture with Alice connected to Charlie connected to Bob. This three party protocol can easily be extended to the general multiparty case.

The key exchange protocol can be decomposed into two main steps: The first step leads the parties to output three keys $K_A$, $K_B$ and $K_C$ such that $K_A\oplus K_B \oplus K_C = 0$. This is the secret-sharing phase.
Then, to establish a pair of shared keys between two parties, it suffices for the third one to disclose its key.

We consider that the protocol is scheduled such that each pair of participants gets, on average, the same amount of keys. This is achieved with a key-establishment policy that consists in choosing the party that reveals its key following a round-robin scheduling. Again, by changing the key length and the scheduling policy, it is trivial to adapt the protocol to a weighted round-robin, allowing some pairs of nodes to get more keys than others. This does not affect the overall security, as the proof, we give here can be easily adapted.
\vspace{1em}\\
\noindent
{\bf Protocol}\\
\textit{Input Alice}: an integer $N$;
%   , a function $pe()$. A procedure $synd()$, $pa()$. A tolerance threshold $tol<1$. Security thresholds $\epsilon_{sec}, \epsilon_{cor}$. A compression threshold function $\tau()$ \\
  \textit{Charlie}: $N$;
%   , $pe()$. A decoder $decode()$. $pa()$, $tol$, $\epsilon_{sec}, \epsilon_{cor}$, $\tau()$.\\
  \textit{Bob}: $N$.\\
  \textit{Output Alice}: a key $K_A$; \textit{Charlie}: a key $K_C$; \textit{Bob}: a key  $K_B$.
  
  \begin{description}
  \item[Initialization/pre-generation]
    \begin{itemize}
    \item Alice chooses uniformly at random $2N$ bits: $b^A=(b^A_1,\cdots b^A_N) \in \{0,1\}^N$ and $r^A=(r^A_1,\cdots r^A_N) \in \{0,1\}^N$.
   \item Charlie chooses uniformly at random $2N$ bits: $b^C=(b^C_1,\cdots b^C_N) \in \{0,1\}^N$ and $r^C=(r_1^C,\cdots r_N^C) \in \{0,1\}^N$
   \item Bob chooses uniformly at random $N$ bits $b^B=(b^B_1,\cdots b^B_N) \in \{0,1\}^N$
    \end{itemize}
  \item[State Distribution] 
    \begin{itemize}
    \item Alice prepares $N$ quantum systems in the $\ket{0}$ state, applies rotations $R_Y(\theta_i^A)$ with $\theta_i^A=b^A_i\pi/2+r^A_i\pi$ for $i=0,\dots,N-1$, and sends $\ket{\phi}^A=\bigotimes_{i=0}^{N-1}R_Y(\theta_i^A)\ket{0}_i$ to Charlie. 
    \item For $i=0,\dots,N-1$, Charlie applies $R_Y(\theta_i^C)$ with $\theta_i^C=b_i^C\pi/2+r_i^C\pi$ on $\ket{\phi}^{A}_i$ and sends $\ket{\phi}^{C}=\bigotimes_{j=0}^{N-1}R_Y(\theta_i^C)\ket{\phi}^{A}_i$ to Bob
    \item For $i=0,\dots,N-1$, Bob applies $R_Y(\theta_i^B)$ with $\theta_i^B=b_i^B\pi/2$ on $\ket{\phi}^{C}_i$ and measures $R_Y(\theta_i^B)\ket{\phi}^{C}_i$ in the computational basis, denoting the measurement result $r_i^B\in\{0,1\}$. 
    \end{itemize}
  \item[Public announcement]
  \begin{itemize}
      \item Through an authenticated public channel,
  Alice, Charlie and Bob share their basis choices
 $(b_j^A)$, $(b^C_j)$ and $(b^B_j)$.
 \item Alice updates the sifted key register $K_A$ with 
 \[K_A=(r^A_i)_{i \text{ s.t. } b_i^A \oplus b_i^C \oplus b_i^B = 0}\]
 \item Bob updates the sifted key register $K_B$ with
 \[K_B=(r^B_i)_{i \text{ s.t. } b_i^B \oplus b_i^C \oplus b_i^B = 0}\]
 \item Charlie updates the sifted key register  $K_C$ with
     \[K_C=\left(r_i^C\oplus \frac{b_i^A + b_i^C + b_i^B \mod 4}{2}\right)_{i \text{ s.t. } b_i^A \oplus b_i^C \oplus b_i^B = 0}\]
  \end{itemize}
  
\end{description}

\vspace{1em}
\noindent
{\bf Correctness of the protocol}\\
We assume here that the final state is received by Bob with no losses and errors. While this is unrealistic, this does not affect the security proof.
For all $i$, the final state before the measurement is
\begin{flalign*} & R_Y\Big( (b_i^A+b_i^C+b_i^B)\frac\pi2  + (r^A_i + r_i^C)\pi\Big)\ket{0}\\
   = & R_Y\Big(\big(\frac{b_i^A+b_i^C+b_i^B}{2} + (r^A_i + r_i^C)\big)\pi\Big)\ket{0}\\
   = & Y^{\frac{b_i^A+b_i^C+b_i^B}{2} \oplus r_i^A \oplus  r_i^C}\ket{0}  & \textrm{if } & b^A_i \oplus b^C_i \oplus b^B_i = 0\\  
   = & i^{r_i^A\oplus r_i^C \oplus \frac{b_i^A+b_i^C+b_i^B}{2}}\ket{r^A_i\oplus r_i^C \oplus \frac{b_i^A+b_i^C+b_i^B}{2}}. & 
\end{flalign*}
By selecting only the outputs $i$ such that 
$b^A_i \oplus b^C_i \oplus b^B_i = 0$, we get that the final
measurement outputs
\[r_i^B = r_i^A \oplus r_i^C \oplus \frac{b_i^A+b_i^C+b_i^B}{2},\]
so that the final keys satisfy
\[K_A \oplus K_B \oplus K_C = 0.\]

As previously mentioned, a shared key can then be established between two parties by publicly disclosing the key of the third party. For simplicity, we consider round-robin scheduling. After repeating the protocol, we can therefore consider that the final state consists of three keys $K_1$, $K_2$ and $K_3$ shared by Alice and Bob, Alice and Charlie, and Charlie and Bob, respectively.

% \subsubsection{Post processing}
% We describe the steps required for 
% Denote $n=|SK_A|=|SK_C|$.
% \begin{description}
     
%      \item[Error Estimation] Charlie chooses uniformly at random a subset of $SK_C$ of size $m$, where $m=pe(n)$:
%       $M \subset \{1,\cdots,n\}$, $m=|M|$, $EK_C=(|SK_C[i])_{i\in M}$ %% EK : Estimation Key
       
%       and sends $M$ and $EK_C$ to Alice.
       
%       Alice measures the proportion of errors $q$:
       
%       $q = |\{i, EK_A[i]\neq EK_C[i]\}|/m$, $EK_A=(SK_A[i])_{i\in M}$.
       
%       if $q\geq tol$, they abort (set flag to abort), otherwise they continue.
       
%       Lets denote $TK_A=(SK_A[i])_{i\notin M}$ and  $TK_C=(SK_C[i])_{i\notin M}$ %% TK: Transmission Key  
%      \item[Error Correction]
%       Alice computes a syndrome of her key $\sigma = synd(TK_A)$ and sends it to Charlie.
       
%       Charlie call $DK_C=decode(TK_C,\sigma)$ %% DK: Decoded Key
%      \item[Privacy Amplification]
%       Alice compute $K_A=pa(TK_A,\tau(q,\epsilon_{sec},\epsilon_{cor}, n))$
       
%       Charlie compute $K_C=pa(DK_C,\tau(q,\epsilon_{sec},\epsilon_{cor}, n))$
% \end{description}

\section{Security Proof of key establishment on Qline}
\label{sec:proof}
In this section, we provide a formal security proof of key establishment over Qline in the \emph{abstract cryptography} framework.
The key establishment over Qline is very similar to standard QKD. For this reason, we are able to leverage the security of QKD~\cite{tomamichel2017largely} and obtain the security of our key establishment. 

Our proof consists of the following steps: 
In section~\ref{sec:assumptions},
we introduce the assumptions required for proving the security of prepare-and-measure QKD from~\cite{tomamichel2017largely}. These assumptions are also required for Qline in order to prove the security of the key establishment.
In Section~\ref{sec:thm}, we state the main theorem.
We then define the real and ideal functionalities of key establishment over Qline and introduce the simulator in Section~\ref{sec:notations}.
In Section~\ref{sec:lemmas}, we prove the main lemmas for indistinguishability.
Finally, we show that the real key establishment securely realizes its ideal functionality in Section~\ref{sec:indistinguishability}

% \lucas{Does this belong here? In fact we never use this}
% Traditionally, the security of QKD is proven by showing that the state generated by a \emph{real protocol} is close in trace distance to an ideal state. In the ideal case, the state generated by legitimate parties is in tensor product with the eavesdropper's state. 
% Using the abstract cryptography framework, 
% Renner and Portmann showed that this can be derived from more fundamental principles~\cite{rennerabstract}, proving that closeness in trace distance actually implies indistinguishability in the abstract cryptography sense.

\subsection{Assumptions} \label{sec:assumptions}

Our security proof mainly relies on the security of prepare-and-measure QKD~\cite{tomamichel2017largely}.
We thus require similar assumptions for the key establishment over Qline to be secure.

\begin{assumption}
\label{assump:general}
The following assumptions are required  both for entangled-based and prepare-and-measure QKD protocols, which we also adopt here:
\begin{itemize}
    \item \textbf{Finite-dimensional quantum systems:} The players' Hilbert space can be represented with finite-dimensional Hilbert spaces.
    \item \textbf{Sealed laboratories:} The players are in spatially separated laboratories. This allows writing the joint system of Alice, Charlie and Bob in a tensor product form.
    \item \textbf{Random seeds:} Each player has access to some uniform random seed.
    \item \textbf{Authenticated communication channel.}
\end{itemize}
\end{assumption}

We now introduce the assumptions specifically required for 
the security of prepare-and-measure QKD.

\begin{assumption}[Alice's preparation]
    \label{assump:Alice}  In each round $i \in [N]$, denote $b \in \{0,1\}$ the basis choice and $r \in \{0,1\}$ the bit value.
    Alice's preparation satisfies the following assumptions.
    \begin{itemize}
        \item The quantum state $\rho^{b,r}_{A_i}$ produced by Alice, doesn't leak any information about the basis choice: 
    \begin{equation}\label{eq:qkd-cond-alice-1}
        \sum_{r\in\{0,1\}} \rho^{0,r}_{A_i} = \sum_{r\in\{0,1\}} \rho^{1,r}_{A_i}.
    \end{equation}
    \item The states of the preparation are sufficiently complementary, that is, there exists a constant $\overline{c} < 1$ such that 
    \begin{equation}\label{eq:qkd-cond-alice-2}
        c_i := c(\{\rho^{0,r}_{A_i}\}_r, \{\rho^{1,r'}_{A_i}\}_{r'}) \leq \overline{c},
    \end{equation}
    where
    \[
        c(\{\rho^r\}_r, \{\sigma^{r'}\}_{r'}) := \max_{r,r'} \parallel \sqrt{\rho^r}X^{-1}\sqrt{\sigma^{r'}} \parallel^2_{\infty},
    \]
    for states satisfying $ \sum_{r}\rho^r = \sum_{r'} \sigma^{r'}$ and $X:= \sum_{r} \rho^r$.
    \end{itemize}
    %($\overline{c}' = \frac{1}{2}$ for actual BB84 states)\\
    %\emph{NOTE: The first condition does not work for weak-coherent states}

\end{assumption}

The following assumption is for Bob's measurements. While our protocol does not consider inconclusive outcomes, we take them into account for the security proof.
Such inconclusive outcomes occur when a photon is expected, but either none is detected (photon loss), or more than one is detected (dark count). 

\begin{assumption}[Bob's measurement]
    \label{assump:Bob}
    \begin{itemize}
        \item Bob's quantum system can be decomposed as $B = B_1 B_2 \dots B_M$ and Bob's measurement can be represented as operators acting on the individual subsystems. 
     
         \item Bob's measurements satisfy:
    \begin{equation}\label{eq:qkd-cond-bob}
        (M^{0, \emptyset}_{B_i})^{\dagger} (M^{0, \emptyset}_{B_i}) = (M^{1, \emptyset}_{B_i})^{\dagger} (M^{1, \emptyset}_{B_i})
    \end{equation}
    where $\{M^{b,r}_{B_i}\}_{r\in\{0,1,\emptyset\}}$ is the generalized measurement on subsystem $B_i$ with setting $b\in\{0,1\}$, $r$ being the outcome bit and $\emptyset$ corresponding to an inconclusive outcome. 
    \end{itemize}
\end{assumption}
We need one additional assumption that is specific to Qline. This assumption prevents an eavesdropper from injecting large quantum systems into Charlie to probe its rotation.
\begin{assumption}
    \label{assump:qline}
    The system that Charlie receives and sends is composed of qubits. 
%Similarly, these attacks can be prevented by monitoring the injection of light into Charlie. In practice, this can be done by simply measuring the power of the optical signal received by Charlie.
\end{assumption}
This assumption can be bypassed in real systems, in which Charlie could in principle apply its rotation to multiple qubits without realizing it. An eavesdropper could use
this at its advantage to learn the rotation performed by Charlie.
This attack is reminiscent to Trojan-horse attacks on standard QKD systems, which exploit the reflectivity of optical components.
We thus consider this as a side-channel attack, which will be treated 
in a separate work.

\subsection{Main theorem} \label{sec:thm}

In~\cite{tomamichel2017largely}, it was proved that a QKD protocol
that satisfies Assumptions~\ref{assump:general},~\ref{assump:Alice}~and ~\ref{assump:Bob} is $\epsilon$-secure in the abstract cryptography framework. 
In~\cite{tomamichel2017largely}, the parameter $\epsilon$
is an explicit function of several parameters of the protocol
such as $\bar c$, and $N$ but also depends on the parameters of the post-processing.
More details about this relationship can be found in~\cite{tomamichel2017largely} Equations (56) to (58).
The following theorem stating the security of Qline considers the same $\epsilon$.

\begin{theorem}\label{th:qline-security}
Under Assumptions~\ref{assump:general},~\ref{assump:Alice},~\ref{assump:Bob}~and~\ref{assump:qline}, the key establishment protocol on Qline is $3\epsilon$-secure in the abstract cryptography framework, where $\epsilon$ is the security parameter of a standard QKD protocol under the same 
assumptions.
\end{theorem}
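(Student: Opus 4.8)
The plan is to reduce the security of the three-party Qline key establishment to the security of standard prepare-and-measure QKD, applied once per pair of parties, and then to combine the three reductions via the triangle inequality of the distinguishing pseudo-metric $\delta$ — this is exactly where the factor $3$ in $3\epsilon$ originates. Concretely, I would first observe that, under round-robin scheduling, the repeated protocol partitions its rounds into three disjoint groups, one per pair $(A,B)$, $(A,C)$, $(C,B)$, determined by which party reveals its key. This lets me write $P$ as the parallel composition of three sub-protocols $P^{AB}$, $P^{AC}$, $P^{CB}$ on disjoint rounds, and the goal becomes to show that each $P^{XY}$ securely realises an ideal single-key functionality $\tilde P^{XY}$ with a simulator $\sigma^{XY}$ and error $\epsilon$, before assembling these into a global simulator $\sigma$ for $\tilde P$.

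The heart of the argument is the per-pair reduction. I would exploit that the $R_Y$ rotations compose additively, $R_Y(\alpha)R_Y(\beta)=R_Y(\alpha+\beta)$, so that on each round the state reaching Bob is $R_Y\big((b^A_i+b^C_i+b^B_i)\pi/2+(r^A_i+r^C_i)\pi\big)\ket 0$, i.e. a BB84 state up to global phase. For the pair $(A,B)$, with Charlie revealing, I would fold Charlie's rotation together with the two channel segments ($A\to C$ and $C\to B$) into a single effective quantum channel: since $b^C$ and $K_C$ are public on the sifted rounds, Charlie's rotation is a known fixed unitary that the eavesdropper can absorb into the channel it already controls. This exhibits $P^{AB}$ as a genuine prepare-and-measure QKD instance in which Alice's preparation produces BB84 states, so Assumption~\ref{assump:Alice} holds with the standard BB84 values for the no-leakage identity and the complementarity bound, Bob's measurement is the standard BB84 measurement satisfying Assumption~\ref{assump:Bob}, and Assumptions~\ref{assump:general} and~\ref{assump:qline} are inherited directly; the theorem of~\cite{tomamichel2017largely} then yields $P^{AB}\approx_\epsilon \tilde P^{AB}\circ\sigma^{AB}$. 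The pair $(C,B)$, with Alice revealing, is symmetric: Alice's revealed preparation makes the state entering Charlie known, so Charlie's rotation on it is the effective preparation of a BB84 state measured by Bob.

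Finally I would combine the three idealisations by a hybrid argument. Writing $P=P_0$ for the real system and successively replacing one real sub-protocol by its ideal counterpart — $P_1$ with $(A,B)$ idealised, $P_2$ with $(A,B)$ and $(A,C)$ idealised, and $P_3=\tilde P\circ\sigma$ with all three idealised — composability ensures that each replacement, performed in the context of the remaining sub-protocols, changes the distinguishing advantage by at most $\epsilon$, so $\delta(P_j,P_{j+1})\le\epsilon$ and the triangle inequality gives $\delta(P,\tilde P\circ\sigma)\le 3\epsilon$. The global simulator $\sigma$ runs $\sigma^{AB}$, $\sigma^{AC}$, $\sigma^{CB}$ in parallel, each fed its relevant ideal key and producing the simulated channel traffic and public announcements, while maintaining consistency of the announced basis strings and of the relation $K_A\oplus K_B\oplus K_C=0$.

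The main obstacle I expect is the pair $(A,C)$. Unlike the other two, Charlie never measures: his raw key comes from his own private rotation randomness $r^C$, and the correlation with Alice is transferred only after Bob publicly reveals $K_B$. I must therefore identify the correct \emph{virtual} measurement — namely, treat the composite of Charlie's rotation, the second channel segment, Bob's measurement, and the announcement of $r^B$ as a single generalised measurement whose outcome Charlie reconstructs — and verify that it satisfies Assumption~\ref{assump:Bob}, so that the $A\to C$ segment is a prepare-and-measure QKD in the sense of~\cite{tomamichel2017largely}. Crucially, I must show that announcing $K_B$ leaks nothing about the shared value $K_A=r^A$; this hinges on arguing that $r^C$, encoded in Charlie's $\pi$-rotations, is itself protected by BB84 security and acts as a one-time pad decoupling $r^A$ from the revealed combination $r^A\oplus r^C$ in the eavesdropper's view. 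Making this decoupling precise while keeping the reduction strictly inside the hypotheses of~\cite{tomamichel2017largely} is the delicate step; once it is in place, the triangle-inequality assembly is routine.
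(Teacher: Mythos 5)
Your proposal follows essentially the same route as the paper: the same decomposition of the protocol into three pairwise sub-protocols indexed by the revealing party, the same per-pair reduction to the prepare-and-measure QKD theorem of~\cite{tomamichel2017largely} (direct for the pair $(A,B)$, effective preparation by Alice-plus-Charlie for $(C,B)$, virtual measurement by Charlie-plus-Bob for $(A,C)$), and the same assembly of the three $\epsilon$'s into $3\epsilon$ via composability and the triangle inequality. The one step you leave open --- making the $(A,C)$ virtual measurement precise --- is exactly the paper's Lemma~\ref{lemma:charlie-is-bob}, and it is closed along the lines you anticipate: one writes the joint post-measurement state of Alice, Charlie and Bob, traces out Bob, and re-expresses Bob's announced outcomes as a generalized measurement applied locally on Charlie's side; since $\theta^A_i+\theta^C_i$ is uniform on $\{0,\pi/2,\pi,3\pi/2\}$, the state Bob actually measures is a genuine QKD state, and Assumption~\ref{assump:Bob} is inherited from Bob's real device. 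In particular, your worry about the announcement of $K_B$ leaking information on $K_A$ is resolved not by an explicit one-time-pad decoupling argument but by this state-level indistinguishability, after which the $\epsilon$-security statement of~\cite{tomamichel2017largely} directly yields that the final keys are uniform and independent of everything the distinguisher holds, including the announced $K_B$.
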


%The following sections~\ref{sec:notations}, \label{sec:lemmas} and~\ref{sec:indistinguishability} constitutes the proof of this theorem.

\subsection{Setup and notations} \label{sec:notations}
    Let $S$ denote the size of all output keys in the systems we describe.
    %\begin{itemize}
        %\item Basic systems:
        \subsubsection{Basic systems}
        \begin{itemize}
            \item A one-way quantum channel $\Q$ with one input interface for the sender, one output interface for the receiver, and one outer interface. $\Q$ provides external entities full control over the channel. This is modelled by the fact that any input of the sender interface of $\Q$ is output to the outer interface, and the output of the receiver interface directly comes from the input of the outer interface.
            \item A two-way classical channel $\A$ authenticated by a pre-shared key between the two communicating interfaces. $\A$ provides external entities with the ability to read or block the data, but not to tamper with it. This is modelled by an outer interface providing as output a copy $t$ of each message going through the channel, and receiving a binary input $l$ called a blocking lever which, if pulled, prevents the messages to pass through.
            \item An ideal key expansion system $\K$ with three user interfaces. Each one takes as input a short key ($PSK_1$, $PSK_2$ and $PSK_3$), and two of them return back a longer key output ($K_1$ and $K_2$). If $PSK_1 = PSK_2 = PSK_3$, then the outputs $K_1$ and $K_2$ are equal to truly random bit strings of size~$S$. Otherwise, the systems abort with outputs ``$\bot$''. The idealness of $\K$ is captured by the fact that the external entities only get to know the size of the outputs, and can only perform one action which is to force the system to abort, thus preventing the keys to be generated. Accordingly, the outer interface of $\K$ consists of an output $s$ giving the size of the keys ($S$ or $0$), as well as a blocking lever $l$. If the blocking lever is pulled ($l=1$), then the system aborts regardless of any other input.
        \end{itemize}
        \subsubsection{Main systems}
        %\item Main systems:
        \begin{itemize}
            \item The systems 
            {$\pi^{Qline}_A$}, $\pi^{Qline}_C$ and $\pi^{Qline}_B$ realize the tripartite key establishment protocol on Qline described in section \ref{sec:key-exchange-def}. They respectively act as Alice, Charlie and Bob. As such, they each have one \textit{user} interface for inputs and outputs of the protocol, as well as \textit{inner} interfaces to communicate with each other. They have no \textit{outer} interfaces as they are considered safe regarding external entities.
            
            The user interfaces of $\pi^{Qline}_A$, $\pi^{Qline}_C$ and $\pi^{Qline}_B$ consists in a short key input ($PSK_A$, $PSK_C$ and $PSK_B$ respectively) and a long key output ($K_A$, $K_C$ and $K_B$ respectively) of size $S$.
            The inner interfaces consist in the required inputs and outputs to be sequentially composed with $\A$ and $\Q$.
            $\pi^{Qline}_A$ and $\pi^{Qline}_B$ have one inner interface while $\pi^{Qline}_C$ has two.
            
            We then consider 3 sub-protocols: $\QL_{AC}$, $\QL_{CB}$ and $\QL_{AB}$ which each comes with its versions $\pi^{Qline_{AC}}$, $\pi^{Qline_{CB}}$ and $\pi^{Qline_{AB}}$ of the three systems described above.
            \begin{align*}
                \QL_{AC} &:= \pi^{Qline_{AC}}_A(\Q||\A)\pi^{Qline_{AC}}_C(\Q||\A)\pi^{Qline_{AC}}_B \\
                \QL_{CB} &:= \pi^{Qline_{CB}}_A(\Q||\A)\pi^{Qline_{CB}}_C(\Q||\A)\pi^{Qline_{CB}}_B \\
                \QL_{AB} &:= \pi^{Qline_{AB}}_A(\Q||\A)\pi^{Qline_{AB}}_C(\Q||\A)\pi^{Qline_{AB}}_B
            \end{align*}
            In these protocols, one of the players, respectively Bob, Alice and Charlie reveal its key to the others via the authenticated channels between the post-selection step and the post-processing step. One of the other players then adds (bit-wise XOR) the revealed key to its own, and performs the post-processing with the remaining player, so that they output two equal final keys. The player who reveals their secret does not have a key output at its user interface (see figure~\ref{fig:Qline}).
            
            The full key establishment on Qline is then defined by 

            \begin{equation} \label{eq:qline_def}
                \QL_{AC} || \QL_{CB} || \QL_{AB}
            \end{equation}

            \item The systems $\sigma^{Qline_{AC}}$, $\sigma^{Qline_{CB}}$ and $\sigma^{Qline_{AB}}$ are simulators. They have an inner interface meant to be plugged into the outer interface of $\K$, involving the size $s$ of the output key as well as the blocking lever $l$ enabling to make $\K$ abort.
            They also have an outer interface matching the one of $\QL_{AC}$, $\QL_{CB}$ and $\QL_{AB}$ respectively. This interface consists in the following inputs and outputs: $\splitatcommas{\{\rho_0', \rho_1', \rho_2', \rho_3', t_0', t_1', l'_0, l'_1\}}$.
        
            We build the simulators as shown in Figure~\ref{fig:K.sigma}.
            In order to produce inputs and outputs of the outer interface, $\sigma^{Qline}$ simulates the Qline by internally running the key establishment and exposing the outer interface of that Qline as its own outer interface. The variants of the simulator $\sigma^{Qline_{AC}}$, $\sigma^{Qline_{CB}}$ and $\sigma^{Qline_{AB}}$ respectively run the $\QL_{AC}$, $\QL_{CB}$ and $\QL_{AB}$ protocols. Regarding the user interfaces of the simulated Qline, the simulator inputs at each of them equal pre-shared keys $PSK'_A = PSK'_C = PSK'_C$, and ignores the output.
        
            In order to prevent             
            $\K$ from aborting while
             the simulated Qline successfully terminates, or vice-versa, we set the following additional behaviours of $\sigma^{Qline}$:
            \begin{itemize}
                \item If the simulated Qline aborts for any reason, the simulator triggers the blocking lever $l$ of $\K$, thus making it also abort.
                \item In the event where $s=0$, indicating that $\K$ aborted for having been provided different pre-shared key inputs, then $\sigma^{Qline}$ inputs different pre-shared keys $PSK'_A$, $PSK'_C$ and $PSK'_B$ at the user interfaces of the simulated Qline.
            \end{itemize}

        \end{itemize}
        
    %\end{itemize}

    % Qline
    \begin{figure}[!ht]
        \centering 
        \includegraphics[scale=0.44]{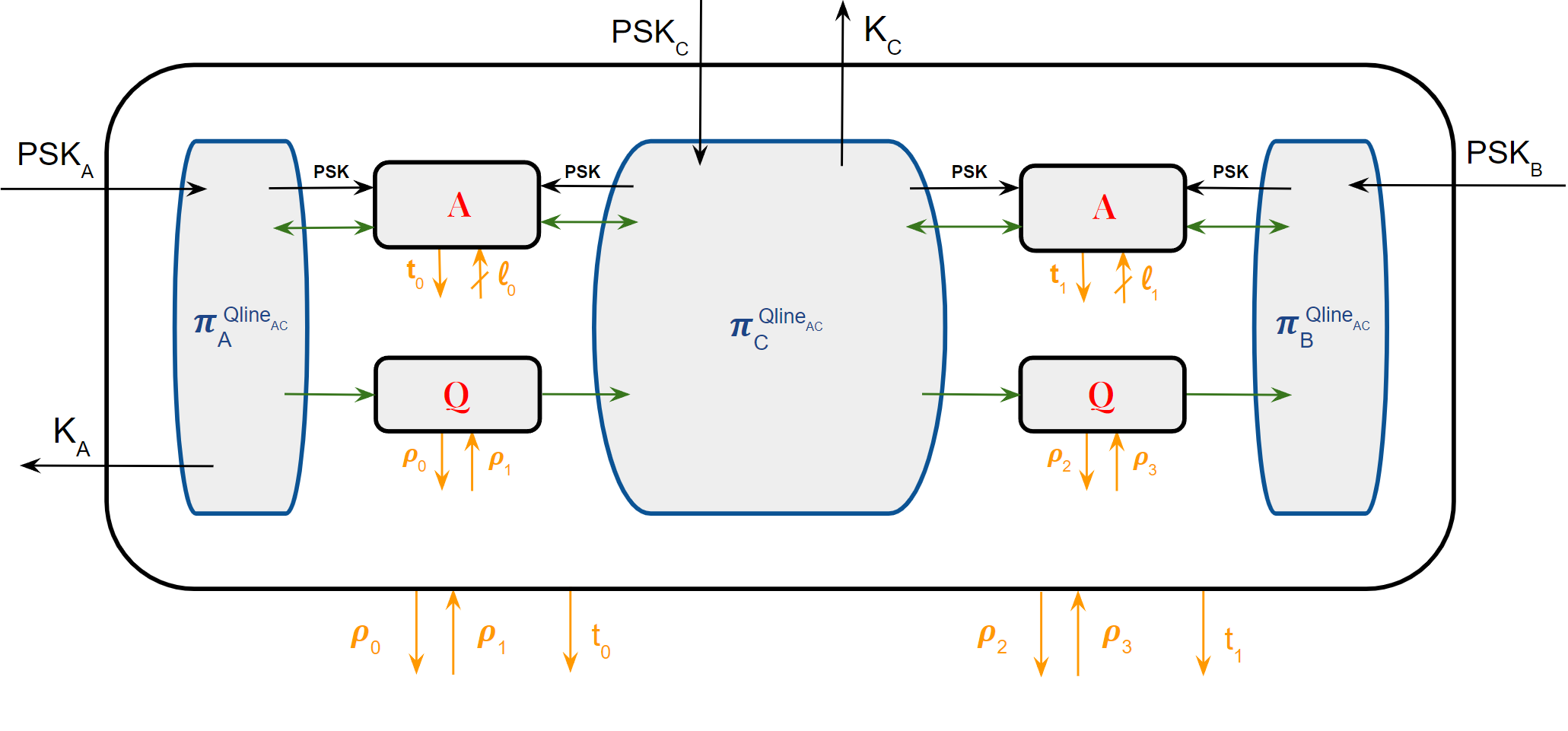}
        \caption{\centering  The $\QL_{AC}$ variant of Qline}
        \label{fig:Qline}
    \end{figure}

    % \K \circ \sigma
    \begin{figure}[!ht]
        \centering \hspace*{-1.25cm}
        \includegraphics[scale=0.33]{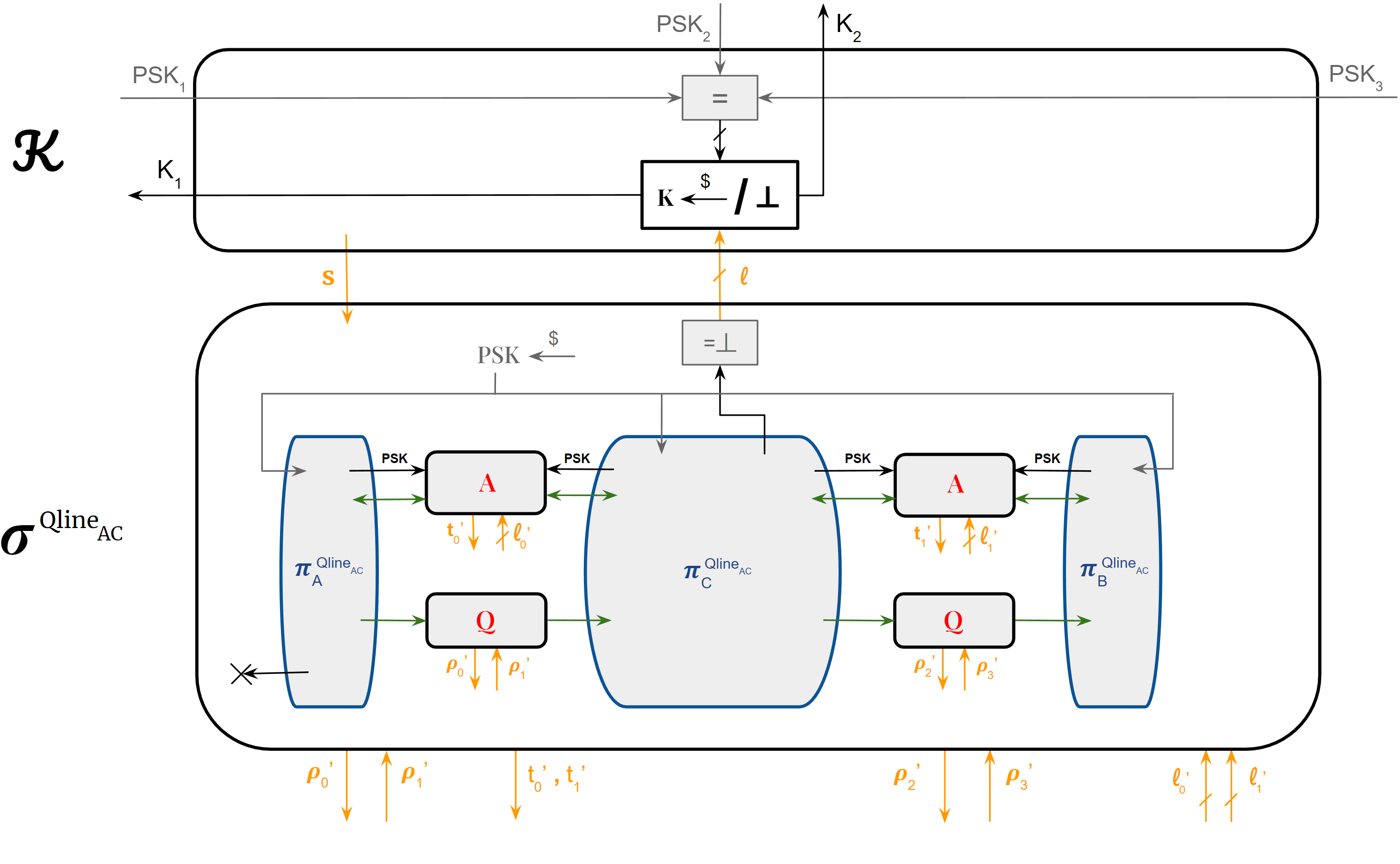}
        \caption{\centering \small The simulator $\sigma^{Qline_{AC}}$ plugged on $\K$}
        \label{fig:K.sigma}
    \end{figure}

\subsection{Preliminary lemmas} \label{sec:lemmas}
    
    Before proving Theorem~\ref{th:qline-security}, we introduce two lemmas.
    These lemmas describe how, in $\QL_{AC}$ (respectively $\QL_{CB}$), after Bob (respectively Alice) has revealed its key, the resulting states and measurements
    %\lucas{it's not only the states but also the measurement properties (assumption \ref{eq:qkd-cond-bob})}
    satisfies all the assumptions of a secure prepare-and-measure QKD protocol between Alice and Charlie (respectively Charlie and Bob) introduced in Section~\ref{sec:assumptions}.
    
    In order to show that, we prove that the states generated on the Qline are indistinguishable from genuine QKD states. Moreover, we show that the combination of Charlie and Bob in $\QL_{AC}$ and the combination of Alice and Charlie in $\QL_{CB}$ satisfy Assumption~\ref{assump:Bob} on Bob's measurement and
    Assumption~\ref{assump:Alice} on Alice's preparation, respectively.
    
    The first lemma considers the protocol $\QL_{AC}$.
    In this protocol, Charlie does not perform
    any measurements directly. 
    However, we can still consider Charlie's state after the key disclosure stage, and prove that it satisfies the desired properties.
    In other words, we show that the post-measurement states of Charlie, combined with the actual measurements performed by Bob, satisfy the measurement property of Bob in QKD, that is, Assumption~\ref{assump:Bob}.
    %We show at the same time, that the post-measurement state is indistinguishable from Bob's state in QKD, which means that Charlie will have the same information as if he had performed the measurement himself. 
    
    \begin{lemma}\label{lemma:charlie-is-bob}
    The joint classical-quantum state of $\pi_A^{Qline_{AC}}$ and $\pi_C^{Qline_{AC}}$ is indistinguishable from the joint state of Alice and Bob in prepare-and-measure QKD. Additionally, after $\pi_B^{Qline_{AC}}$ disclosed its key, the combined system $\pi_C^{Qline_{AC}}(\Q||\A)\pi_B^{Qline_{AC}}$ satisfies Assumption~\ref{assump:Bob} on Bob's measurement.
    
    \end{lemma}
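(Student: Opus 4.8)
The plan is to establish the two assertions of the lemma by exhibiting $\QL_{AC}$ as a prepare-and-measure QKD in which Alice is the sender and Charlie the receiver, with Charlie's \emph{measurement} physically realised by his rotation $R_Y(\theta^C)$ followed by the honest measurement of Bob and Bob's authenticated disclosure of $(b^B,r^B)$. I would first settle the joint-state assertion and then the measurement assertion.

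For the joint-state assertion I would begin with a direct computation of Alice's emitted states. Evaluating $R_Y(\theta_i^A)\ket0$ for $\theta_i^A=b_i^A\pi/2+r_i^A\pi$ over the four choices of $(b_i^A,r_i^A)$ gives, up to a global phase, exactly the four BB84 states $\ket0,\ket1,\ket+,\ket-$, with $b_i^A$ selecting the basis and $r_i^A$ the encoded bit. Consequently the classical registers $(b^A,r^A)$ of $\pi_A^{Qline_{AC}}$ together with the qubit emitted over the eavesdropper-controlled channel $\Q$ constitute the very same classical-quantum object as the sender side of prepare-and-measure QKD; in particular eq (\ref{eq:qkd-cond-alice-1}) holds since both basis-averaged states equal $\ket0\bra0/2+\ket1\bra1/2=I/2$, and eq (\ref{eq:qkd-cond-alice-2}) holds with the usual BB84 overlap constant $\bar c<1$. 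The only Qline-specific object on the receiver side is Charlie's rotation; since it is a unitary fixed by Charlie's own registers $(b^C,r^C)$, which play the role of the QKD receiver's setting register, I would identify the qubit received by $\pi_C^{Qline_{AC}}$ with the QKD receiver's system and fold the rotation into the measurement treated next, so that the two joint states coincide.

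For the measurement assertion I would condition on the public basis announcements and read $\pi_C^{Qline_{AC}}(\Q||\A)\pi_B^{Qline_{AC}}$ as a per-round generalised measurement with setting $b^C$ and outcome in $\{0,1,\emptyset\}$, checking three points. (i) Locality: every round uses a fresh qubit and Charlie and Bob act round by round, giving the decomposition $B=B_1\cdots B_N$ demanded by the first item of Assumption~\ref{assump:Bob}, with Eve's (possibly coherent) attack on $\Q$ belonging to the channel rather than to the apparatus. (ii) Well-definedness of the conclusive outcome: writing $R_Y(\theta_i^C)=Y^{r_i^C}R_Y(b_i^C\pi/2)$, the factor $Y^{r_i^C}$ flips the computational readout precisely when $r_i^C=1$, so Charlie's private bit cancels in the reconciled value $r_i^C\oplus r_i^B\oplus\tfrac{b_i^A+b_i^C+b_i^B}{2}$; the remaining $\tfrac12(\cdots)$ is a deterministic, publicly computable, basis-dependent relabelling that I would absorb into the reconciliation step. (iii) The inconclusive-outcome identity eq (\ref{eq:qkd-cond-bob}): by Assumption~\ref{assump:qline} the carrier is a qubit, and because photon loss and dark counts are polarisation-independent, Bob's inconclusive element is proportional to the identity on that qubit subspace. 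This proportionality to the identity is preserved by Bob's unitary rotation, by the unital adjoint of any trace-preserving attack on the Charlie--Bob link, and by conjugation with Charlie's unitary rotation; hence $(M^{0,\emptyset}_{B_i})^\dagger M^{0,\emptyset}_{B_i}=(M^{1,\emptyset}_{B_i})^\dagger M^{1,\emptyset}_{B_i}$ as required.

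The step I expect to be the main obstacle is item (i) taken together with the placement of the eavesdropper: unlike in ordinary QKD, the party that holds the key (Charlie) does not measure, and the channel $\Q$ that Eve controls sits \emph{between} Charlie's rotation and Bob's measurement, so one must argue that this composite is nonetheless a legitimate honest per-round receiver measurement acting on Alice's state. The way I would handle this is to exploit that Eve's intervention on $\Q$ is independent of the setting $b^C$, which is announced only afterwards, so it can be commuted past Charlie's rotation into the global channel attack covered by the QKD security theorem, leaving an honest per-qubit apparatus; checking that Bob's disclosure of $(b^B,r^B)$ and Charlie's local post-processing then reproduce a setting-$b^C$ outcome agreeing with Alice exactly on the sifted rounds $b_i^A\oplus b_i^C\oplus b_i^B=0$ is where the bulk of the verification lies.
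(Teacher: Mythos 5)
Your overall strategy is the paper's: view $\QL_{AC}$ as prepare-and-measure QKD with Alice unchanged and the receiver realised by Charlie's rotation, the Charlie--Bob link, Bob's honest measurement and his authenticated disclosure. But your verification of Equation~\ref{eq:qkd-cond-bob} has a genuine gap: you assume Bob's inconclusive POVM element is \emph{proportional to the identity}, justified physically (``loss and dark counts are polarisation-independent''). Assumption~\ref{assump:Bob} grants only the equality $(M^{0,\emptyset}_{B_i})^{\dagger}M^{0,\emptyset}_{B_i}=(M^{1,\emptyset}_{B_i})^{\dagger}M^{1,\emptyset}_{B_i}$, not proportionality to the identity; the assumption is stated this way precisely to avoid such extra physical hypotheses, so as written you prove the lemma only under a strictly stronger assumption than the paper allows. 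The fix needs no extra hypothesis: write $E_{\emptyset}$ for Bob's common inconclusive element and $F$ for its pull-back (Heisenberg picture) through whatever map sits on the Charlie--Bob link --- by your own correct observation, that map cannot depend on $b^C$ or $b^B$. Charlie's effective inconclusive element with setting $b^C$ is the average over his secret bit $r^C$,
\[
E^{b^C}_{\emptyset}=\tfrac12\sum_{r^C\in\{0,1\}}R_Y(\theta^C)^{\dagger}F\,R_Y(\theta^C)=R_Y(b^C\pi/2)^{\dagger}\,\tfrac12\bigl(F+Y^{\dagger}FY\bigr)\,R_Y(b^C\pi/2),
\]
and since $Y^2=-I$ the operator $\tfrac12(F+Y^{\dagger}FY)$ commutes with $Y$, hence with every $R_Y(\theta)$, so $E^{b^C}_{\emptyset}$ is independent of $b^C$. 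This derives the claimed basis-independence from Assumption~\ref{assump:Bob} alone, and it is both shorter and more general than your argument.

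The second gap is your resolution of what you yourself flag as the main obstacle: ``commuting'' Eve's attack on the Charlie--Bob channel past Charlie's rotation is not a valid move. A general CPTP map does not commute with $R_Y(\theta^C)$; conjugating it by the rotation produces a map that depends on $\theta^C$, which is exactly the dependence you must exclude, and the attack cannot simply be merged with the attack on the Alice--Charlie segment because Charlie's secret rotation sits between the two. The paper does not attempt this move. Its proof compares the honest post-measurement states: its key ingredient, absent from your proposal, is that $\theta^A_i+\theta^C_i$ is uniform on $\{0,\pi/2,\pi,3\pi/2\}$, so the state travelling on the Charlie--Bob segment is itself distributed as a fresh QKD state; combined with the observation that Bob's disclosed outcomes are equivalent to Charlie measuring locally with Bob's parameters, this yields the state identity, and Assumption~\ref{assump:Bob} is transferred to the composite $\pi_C^{Qline_{AC}}(\Q||\A)\pi_B^{Qline_{AC}}$ directly. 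The adversarial channel is only confronted later, when the QKD security theorem is invoked in Lemma~\ref{lem:G-indep}. If you keep your ``effective apparatus'' framing, you should either restrict this lemma to the honest execution as the paper does, or prove setting-independence of the composite including an arbitrary intervening map --- which the computation above accomplishes, since $F$ there is arbitrary --- rather than appeal to a commutation that fails.
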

    
    \begin{proof}
    %In Qline, Bob ($\pi_B^{Qline_{AC}}$) performs the actual measurement. This satisfies Equation~\ref{eq:qkd-cond-bob} since the measurements are the same as in QKD.
    %In consequence, the post-measurement state of $\pi_B^{Qline_{AC}}$ is the same as the one of Bob in prepare-and-measure QKD.
    
    % We characterize the post-measurement state of Charlie ($\pi_C^{Qline_{AC}}$) after the key disclosure by Bob. We show that it is similar to the post-measurement state of Bob, and hence also satisfies Bob's measurement condition. 
    We first write the post-measurement state of Alice and Bob in QKD~\cite{tomamichel2017largely}:
    %after the inconclusive outcomes have been put aside. Let M be the number of conclusive outcomes:
    \begin{equation}
    \begin{split}
        \tau^{qkd}_{AB} = \frac{1}{8^N} \sum_{r_A, b_A, b_B \in \{0,1\}^N} \sum_{r_B \in \{0,1\}^N} & \ket{r_A,r_B,b_A,b_B}\bra{r_A,r_B,b_A,b_B} \\
        & \otimes \ket{\omega_A(r_B)}\bra{\omega_A(r_B)}\\
        & \otimes M^{b_B, r_B}_B \rho^{b_A,r_A}_B (M^{b_B, r_B}_B)^{\dagger}
    \end{split}
    \label{eq:ABQKD}
    \end{equation}

    % \begin{equation*}
    %     = \frac{1}{8^{N-M}} \sum_{r_A, b_A, b_B \in \{0,1\}^{N-M}} \sum_{r_B \in \{0,1\}^{N-M}}  \ket{r_A,r_B,b_A,b_B}\bra{r_A,r_B,b_A,b_B} 
    %     \otimes \ket{\omega_B}\bra{\omega_B}  
    % \end{equation*}
    % \begin{equation*}
    %     + \frac{1}{8^M} \sum_{r_A, b_A, b_B \in \{0,1\}^N}
    %     \ket{r_A,\emptyset,b_A,b_B}\bra{r_A,\emptyset,b_A,b_B} 
    %     \otimes M^{b_B, \emptyset}_B \rho^{b_A,r_A}_B (M^{b_B, \emptyset}_B)^{\dagger}
    % \end{equation*}
    
    where $b_A$ and $b_B$ represent the random basis choices of Alice and Bob, $r_A$ is the bit choice of Alice, $r_B$ is the measurement outcome and $\rho^{b_A,r_A}_B$ is the state received by Bob. $\omega_A(r_B)$ is the set of indices $\{i\in [N] \quad|\quad r_B \neq \emptyset\}$ sent by Bob to Alice describing which outcomes were conclusive. 
    %We purposely omit this classical register in the following because
    
    In order to have the post-measurement state of Charlie in $\QL_{AC}$ after the key announcement stage, we start with the joint state of Alice, Bob and Charlie, after the measurement:
    \begin{equation}
    \small
    \begin{split}
        \tau^{ql}_{ACB} = \frac{1}{32^N} \sum_{b^A, r^A, b^C, r^C, b^B \in \{0,1\}^N} \sum_{r^B \in \{0,1,\emptyset\}^N} & 
        \ket{b^A, r^A, b^C, r^C, b^B, r^B}
        \bra{b^A, r^A, b^C, r^C, b^B, r^B} \\
        & \otimes 
        \ket{\omega_A(r^B), \omega^C}
        \bra{\omega_A(r^B), \omega^C}
        \\
        & \otimes [M^{b^B, r^B} {\rho}_3 (M^{b^B, r^B})^{\dagger}]_B
    \end{split}
    \end{equation}
     Where $\omega_A(r^B)$, as for QKD, is the set $\{i\in [N] \quad|\quad r_B \neq \emptyset\}$ held by Alice, while $\omega^C$ is the complete description of the measurement outcomes that Bob sent to Charlie.
    The state ${\rho}_3$ can be written as follows:
    \begin{equation}
        {\rho}_3 = \bigotimes^N_{i=1}\ket{\phi'_i}\bra{\phi'_i}, \quad \quad \ket{\phi'_i} = R_Y(\theta^C_i)R_Y(\theta^A_i)\ket{0} = R_Y(\theta'_i)\ket{0}
    \end{equation}
    %In the Qline protocol, Bob only communicates the conclusive measurement outcomes to the other parties. Intuitively, this should imply that if Bob's measurement satisfies Equation~\ref{eq:qkd-cond-bob}, then Charlie's post-measurement information also satisfies Equation~\ref{eq:qkd-cond-bob}, since it is restricted to the conclusive part of the measurement. To show that formally, 
    We now trace out Bob's system in Qline, and obtain the following joint state for Alice and Charlie:
    \begin{equation}
    \small
    \begin{split}
        \tau^{ql}_{AC} = tr_B[\tau^{ql}_{ACB}] = \frac{1}{8^N} \sum_{b^A, r^A, b^C \in \{0,1\}^N} \sum_{r^C \in \{0,1\}^N} & \ket{b^A, r^A, b^C, r^C}\bra{b^A, r^A, b^C, r^C} \\
        & \otimes 
        \ket{\omega_A(r^B), \omega^C}
        \bra{\omega_A(r^B), \omega^C}
    \end{split}
    \end{equation}
    We can rewrite the subset of outcomes $\omega^C$ as a density matrix with the POVM operators $M'$ applied to it, as follows:
    \begin{equation}\label{eq:qline-ac-pm}
    \small
    \begin{split}
        \tau^{ql}_{AC} = \frac{1}{8^N} \sum_{b^A, r^A, b^C \in \{0,1\}^N} \sum_{r^C \in \{0,1\}^N} & \ket{b^A, r^A, b^C, r^C}\bra{b^A, r^A, b^C, r^C} \\
        & \otimes 
        \ket{\omega_A(r^B)}
        \bra{\omega_A(r^B)}
        \\
        & \otimes [M'^{b^B, r^B} \rho_3 (M'^{b^B, r^B})^{\dagger}]_C
    \end{split}
    \end{equation}
    This is equivalent to the case where Charlie would have done the measurements locally using Bob's parameters. 
    Since $\theta^A_i$ and $\theta^C_i$ are independent and uniformly random in $\{0, \frac\pi2, \pi, \frac{3\pi}2\}$, $\theta'_i=\theta^A_i+\theta^C_i$ is also uniformly random in $\{0, \frac\pi2, \pi, \frac{3\pi}2\}$. We  thus conclude that $\rho_3$ is 
    a QKD state. Therefore, up to considering that $r^C = \emptyset$ for the inconclusive outcomes, the state of Equation~\ref{eq:qline-ac-pm} is indistinguishable from the
    the state of Equation~\ref{eq:ABQKD}.
    
    %Now, since all the interactions with the quantum channel between Alice and Charlie (similar to Alice and Bob in QKD) is captured in the state $\tilde{\rho}_3$, and the distance between this state and the perfect QKD state is at most $\epsilon$, given that all such channels are independent of the random parameters chosen by the parties, we have $\epsilon$-indistinguishability between the simulator states and QKD states.
    %\lucas{We need to review this last paragraph: there is no simulator here / is it really ${\rho}_3$ here? / why is the distance at most $\epsilon$ ?}
    
    Additionally, since Assumption~\ref{assump:Bob}
    is satisfied by $\pi_B^{Qline_{AC}}$, $\pi_C^{Qline_{AC}}(\Q||\A)\pi_B^{Qline_{AC}}$ also satisfies it when considering the state after the public announcement phase. This completes the proof. 
    
    \qed
    \end{proof}

Lemma~\ref{lemma:charlie-is-alice} shows the indistinguishability of the states for the run between Charlie and Bob, where we need to show that Alice and Charlie can be grouped together to perform similar operation as an Alice in QKD.

\begin{lemma}\label{lemma:charlie-is-alice}
The classical-quantum state of $\pi_C^{Qline_{CB}}$ is indistinguishable from the state of Alice in prepare-and-measure QKD. Additionally, after $\pi_A^{Qline_{CB}}$ disclosed its key, the combined system $\pi_A^{Qline_{CB}}(\Q||\A)\pi_C^{Qline_{CB}}$ satisfies  Assumption~\ref{assump:Alice} on Alice's preparation.
\end{lemma}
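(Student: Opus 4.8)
The plan is to mirror the proof of Lemma~\ref{lemma:charlie-is-bob}, transferring the argument from the measurement side to the preparation side. In $\QL_{CB}$ the roles are dual to those in $\QL_{AC}$: Charlie now plays the part of the QKD preparer by applying $R_Y(\theta_i^C)$, Bob plays the QKD measurer, and it is Alice who discloses her key. First I would write the joint classical-quantum state of $\QL_{CB}$ in the same form as $\tau^{ql}_{ACB}$, fold the disclosed parameters $(b^A,r^A)$ into the classical registers, and trace out Bob's measurement apparatus so as to isolate the preparation emitted by the Alice--Charlie block.

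The indistinguishability claim then reduces to an angle-addition argument. The system sent towards Bob in round $i$ is $R_Y(\theta_i^C)R_Y(\theta_i^A)\ket{0}=R_Y(\theta_i^A+\theta_i^C)\ket{0}$, and since $\theta_i^C=b_i^C\pi/2+r_i^C\pi$ is uniform on $\{0,\tfrac{\pi}{2},\pi,\tfrac{3\pi}{2}\}$ and independent of $\theta_i^A$, the combined angle $\theta_i'=\theta_i^A+\theta_i^C$ is again uniform on these four values. Hence the emitted states range uniformly over the BB84 states, which is precisely the preparation of the QKD Alice of~\cite{tomamichel2017largely}; this gives the first claim, that Charlie's classical-quantum state is indistinguishable from that of the QKD preparer.

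It remains to verify Assumption~\ref{assump:Alice} for the combined Alice--Charlie block. With the disclosed $(b_i^A,r_i^A)$ fixed, Charlie's basis bit $b_i^C$ determines the effective preparation basis $b_i^A\oplus b_i^C$ while his value bit $r_i^C$ selects the state inside it, so each bit-sum is a conjugate of $\ket0\bra0+\ket1\bra1=I$ by a known unitary built from $\theta_i^A$ and $b_i^C$, whence
\begin{equation}
\sum_{r_i^C}\rho^{0,r_i^C}_{A_i}=I=\sum_{r_i^C}\rho^{1,r_i^C}_{A_i},
\end{equation}
which is Equation~\ref{eq:qkd-cond-alice-1}. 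For complementarity, the two effective bases are the computational and Hadamard bases conjugated by a fixed rotation; as a unitary preserves inner products, the constant of Equation~\ref{eq:qkd-cond-alice-2} equals $c_i=1/2$, so one may take $\overline{c}=1/2<1$. Exactly as in Lemma~\ref{lemma:charlie-is-bob}, composing with the eavesdropper-controlled channel $\Q$ and the authenticated channel $\A$ between Alice and Charlie does not disturb this conclusion: Alice's own preparation already satisfies Assumption~\ref{assump:Alice}, and Charlie only appends the known unitary $R_Y(\theta_i^C)$, under which Equations~\ref{eq:qkd-cond-alice-1} and~\ref{eq:qkd-cond-alice-2} are invariant.

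I expect the real difficulty to lie not in the indistinguishability of the states, which is the direct angle-addition computation above, but in the bookkeeping for Assumption~\ref{assump:Alice} on the grouped block: one must pin down the effective basis/bit labelling of the combined preparer (here $b_i^A\oplus b_i^C$ together with the disclosed offset) and argue that absorbing the first, eavesdropper-controlled segment of the line into the preparer leaves the no-leakage and complementarity conditions intact. This is the precise dual of how, in the previous lemma, absorbing Charlie's rotation into Bob's measurement preserved Assumption~\ref{assump:Bob}.
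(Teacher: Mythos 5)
Your overall structure matches the paper's: first verify Assumption~\ref{assump:Alice} for the grouped Alice--Charlie block, then use angle addition and uniformity of $\theta'_i=\theta^A_i+\theta^C_i$ to show Charlie's classical-quantum state is identical to that of the QKD preparer. The second half of your argument is exactly the paper's. The gap is in the first half: you verify the no-leakage condition (Equation~\ref{eq:qkd-cond-alice-1}) only for the \emph{honest} input, i.e.\ assuming the state reaching Charlie is $R_Y(\theta^A_i)\ket{0}$; that is what lets you write each bit-sum as a conjugate of $\ket{0}\bra{0}+\ket{1}\bra{1}$. But the combined system named in the lemma, $\pi_A^{Qline_{CB}}(\Q||\A)\pi_C^{Qline_{CB}}$, contains the quantum channel $\Q$ between Alice and Charlie, and $\Q$ hands the eavesdropper full control: the state $\rho_i$ entering Charlie is arbitrary, not Alice's BB84 state. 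The paper therefore proves Equation~\ref{eq:qkd-cond-alice-1} for an \emph{arbitrary} $\rho_i$, using the operator identity
\[
R_Y(\pi/2)\,\rho_i\,R_Y(\pi/2)^{\dagger} + R_Y(3\pi/2)\,\rho_i\,R_Y(3\pi/2)^{\dagger} \;=\; \rho_i + Y\rho_i Y \;=\; \rho_i + R_Y(\pi)\,\rho_i\,R_Y(\pi)^{\dagger},
\]
which holds for every density matrix $\rho_i$. This universal identity is what makes Charlie's basis bit leak-free no matter what the adversary injects into him, and it is missing from your proof.

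Your fallback argument (``Alice's own preparation already satisfies Assumption~\ref{assump:Alice}, and Charlie only appends the known unitary $R_Y(\theta_i^C)$, under which the conditions are invariant'') does not repair this, and in fact inverts the roles: after $\pi_A^{Qline_{CB}}$ discloses its key, $(b^A,r^A)$ are public, and the secret randomness that must not leak is Charlie's $(b^C,r^C)$. So $R_Y(\theta^C_i)$ is not a known unitary; it is precisely the secret-dependent operation whose basis dependence must be shown invisible, and Equations~\ref{eq:qkd-cond-alice-1} and~\ref{eq:qkd-cond-alice-2} must be re-verified with Charlie's bits as the labels (as the paper does), not inherited from Alice's compliance. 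By contrast, your explicit evaluation $c_i=1/2$ of the complementarity constant is acceptable for the ideal rotations of this protocol; the paper instead identifies Charlie's ensemble $(\{\rho^{0,r}_{C_i}\}_r,\{\rho^{1,r'}_{C_i}\}_{r'})$ with Alice's and inherits $c'_i=c_i\leq \overline{c}$ from Assumption~\ref{assump:Alice}, which amounts to the same thing here, so that part is not the issue.
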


\begin{proof}
First we show that the state output by Charlie satisfies Assumption~\ref{assump:Alice} on Alice's preparation. Let the state $\rho_i$ be the state that Charlie receives the $i$-th round of the protocol. Also the bit $b^C_i = \{0,1\}$ and $r^C_i = \{0,1\}$ represent Charlie's choice of rotation. We denote the rotated state $\rho^{b,r}_{C_i}$, which can be one of the four following ones:
\begin{equation}
\begin{split}
    & \rho^{0,0}_{C_i} = \rho_i \\
    & \rho^{0,1}_{C_i} = R_Y(\pi) \rho_i R_Y(\pi)^{\dagger} = Y \rho_i Y \\
    & \rho^{1,0}_{C_i} = R_Y(\pi/2) \rho_i R_Y(\pi/2)^{\dagger} = Y^{1/2} \rho_i Y^{3/2} \\
    & \rho^{1,1}_{C_i} = R_Y(3\pi/2) \rho_i R_Y(3\pi/2)^{\dagger} = Y^{3/2} \rho_i Y^{1/2}
\end{split}
\end{equation}

Computing the sum over basis choice, we get
\begin{equation} \label{eq:BasisChoiceSum0}
    \sum_{r\in\{0,1\}}\rho^{0,r}_{C_i} = \rho^{0,0}_{C_i} + \rho^{0,1}_{C_i}  = \rho_i + Y \rho_i Y
\end{equation}
and 
\begin{equation} \label{eq:BasisChoiceSum1}
    \sum_{r\in\{0,1\}}\rho^{1,x}_{C_i} = \rho^{1,0}_{C_i} + \rho^{1,1}_{C_i}  = Y^{1/2} \rho_i Y^{3/2} + Y^{3/2} \rho_i Y^{1/2} = \rho_i + Y \rho_i Y.
\end{equation}
Equation~\ref{eq:qkd-cond-alice-1} of the assumption is thus satisfied. 

We now check Equation~\ref{eq:qkd-cond-alice-2}:
\begin{equation} \label{eq:alice's-cond-bob}
     c'_i = \max_{r,r'} \parallel \sqrt{\rho^{0,r}_{C_i}}(\rho_i + Y \rho_i Y)^{-1}\sqrt{\rho^{1,r'}_{C_i}} \parallel^2_{\infty} < \bar{c}.
\end{equation}

Notice that Charlie's set of rotations applied to $\rho_i$ is the same as Alice's set of rotations applied to $\ket 0 \bra 0$. 
Then $(\{\rho^{0, r}_{C_i}\}_{r}, \{\rho^{1, r'}_{C_i}\}_{r'}) = (\{\rho^{0, r}_{A_i}\}_{r}, \{\rho^{1, r'}_{A_i}\}_{r'})$ and thus $c_i' = c_i$,
where $c_i$ is defined in Assumption~\ref{assump:Alice}.
As Equation~\ref{eq:qkd-cond-alice-2} is verified by assumption, Equation~\ref{eq:alice's-cond-bob} is satisfied.

$\pi_A^{Qline_{CB}}(\Q||\A)\pi_C^{Qline_{CB}}$ thus satisfies Assumption~\ref{assump:Alice} on Alice's preparation.\\

% For instance if Alice's original states are true BB84 state, which satisfy $c'_i = \frac{1}{2}$, and the operator $X = I$, then after Charlie's rotation $X = \ket{0}\bra{0} + Y\ket{0}\bra{0}Y = I$ remains the same and the maximum of the infinity norm is still $\frac{1}{2}$. 

%Notice that the second condition cannot be satisfied if an external entity intercepts the whole communications from Alice and replaces it with a 
%fully distinguishable set of states.
%In this case, Equation~\ref{eq:qkd-cond-alice-1} is still satisfied, but not Equation~\ref{eq:qkd-cond-alice-2}, since the maximum value of $c'$ is then 1.

We now show that the state of $\pi_C^{Qline_{CB}}$ is indistinguishable from the one of Alice in prepare-and-measure QKD. The full state of Alice's system in QKD before the distribution phase, including her classical registers, is:
\begin{equation}\label{eq:alice-full-state-qkd}
    \tau^{qkd}_{A} = \frac{1}{4^N} \sum_{b_A,r_A \in \{0,1\}^N} \ket{b_A}\bra{b_A} \otimes \ket{r_A}\bra{r_A} \otimes \rho^{b_A,r_A}_A
\end{equation}
where $\rho^{b_A, r_A}_A = \bigotimes_{i=1}^N \rho^{b_{Ai}, r_{Ai}}_A$ is the state that Alice sends to Bob.

In Qline, the full state of Charlie before he sends his state is: 
\begin{equation}\label{eq:charlie-full-state-qline}
    \tau^{ql}_{C} = tr_A[\tau^{ql}_{AC}] = \frac{1}{4^N} \sum_{b^C, r^C \in \{0,1\}^N} \ket{b^C}\bra{b^C} \otimes \ket{r^C}\bra{r^C} \otimes \rho_2
\end{equation}
Where
\begin{equation}
    \rho_2 = \bigotimes^N_{i=1}\ket{\phi'_i}\bra{\phi'_i}, \quad \quad \ket{\phi'_i} = R_Y(\theta^C_i)R_Y(\theta^A_i)\ket{0} = R_Y(\theta'_i)\ket{0}
\end{equation}

Since $\theta^A_i$ and $\theta^C_i$ are independent uniformly random in $\{0, \frac\pi2, \pi, \frac{3\pi}2\}$, $\theta'_i=\theta^A_i+\theta^C_i$ is also uniformly random in $\{0, \frac\pi2, \pi, \frac{3\pi}2\}$. Therefore, we conclude that $\rho_2$ is 
a QKD state, and thus that the classical quantum state $\tau^{ql}_{C}$ of Charlie is perfectly indistinguishable from $\tau^{qkd}_{A}$ the one of Alice in prepare-and-measure QKD.

%We also argue that since any adversarial channel before Charlie is applied to $\rho_0$ and any such channel are also independent of the chosen random parameters, then Charlie's state remains still indistinguishable to the state of Alice state in QKD even if the distinguisher tampers with $\rho_0$. 
\qed
\end{proof}
    
\subsection{Proof of Theorem~\ref{th:qline-security}} 
\label{sec:indistinguishability}

Recall that the key establishment on Qline is defined as the parallel composition of $\QL_{AC}$, $\QL_{CB}$ and $\QL_{AB}$ (\ref{eq:qline_def}).  We separately show the three following indistinguishability relations:
\begin{eqnarray}
    \label{eq:MainAC}
    \QL_{AC} 
    &\approx_{\epsilon} 
    &\K\circ\sigma^{Qline_{AC}} \\
    \label{eq:MainCB}
    \QL_{CB} 
    &\approx_{\epsilon} 
    &\K\circ\sigma^{Qline_{CB}} \\
    \label{eq:MainAB}
    \QL_{AB} 
    &\approx_{\epsilon} 
    &\K\circ\sigma^{Qline_{AB}}
\end{eqnarray}

Since in Abstract Cryptography, security is composable~\cite{AbstractCrypto}, this immediately implies that
the Qline protocol is $3\epsilon$-secure.
%Equations~\ref{eq:MainAC}, \ref{eq:MainAB} and~\ref{eq:MainCB}
    
We present the proof of Equation~\ref{eq:MainAC} in full detail. 
We then explain how this can be adapted to establish Equations~\ref{eq:MainCB} and~\ref{eq:MainAB}.

In order to prove Equation~\ref{eq:MainAC}, we consider a distinguisher $\D$ which is given a system $\S$, either equal to $\QL_{AC}$ or to $\K\circ\sigma^{Qline_{AC}}$, uniformly at random.
We denote the input and outputs of $\S$ by $\splitatcommas{\{\tilde{PSK}_A, \tilde{PSK}_C, \tilde{PSK}_B, \tilde{K}_A, \tilde{K}_C, \tilde{\rho}_0, \tilde{\rho}_1, \tilde{\rho}_2, \tilde{\rho}_3, \tilde{t}_0, \tilde{t}_1, \tilde{l}_0, \tilde{l}_1}\}$.
We split the inputs and outputs into two groups, namely the main key outputs $G = \{\splitatcommas{\tilde{K}_A, \tilde{K}_C}\}$, and the other inputs and outputs $H = \splitatcommas{\{\tilde{PSK}_A, \tilde{PSK}_C, \tilde{PSK}_B, \tilde{\rho}_0, \tilde{\rho}_1, \tilde{\rho}_2, \tilde{\rho}_3, \tilde{t}_0, \tilde{t}_1, \tilde{l}_0, \tilde{l}_1}\}$. 

In the following, we show that $G\cup H$ is independent of whether $\S$ is $\QL_{AC}$ or $\K\circ\sigma^{Qline_{AC}}$. To do so, we separately study the independence of each subset $G$ and~$H$. 

\begin{lemma} \label{lem:H&S-indep}
The set of inputs and outputs $H = \splitatcommas{\{\tilde{PSK}_A, \tilde{PSK}_C, \tilde{PSK}_B, \tilde{\rho}_0, \tilde{\rho}_1, \tilde{\rho}_2, \tilde{\rho}_3, \tilde{t}_0, \tilde{t}_1, \tilde{l}_0, \tilde{l}_1}\}$ is independent of whether $\S$ is $\QL_{AC}$ or $\K\circ\sigma^{Qline_{AC}}$.
\end{lemma}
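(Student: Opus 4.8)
The plan is to prove Lemma~\ref{lem:H&S-indep} by exhibiting, for each input/output in the set $H$, that its distribution (or the map inducing it from the distinguisher's inputs) is identical whether $\S$ is the real protocol $\QL_{AC}$ or the ideal composition $\K\circ\sigma^{Qline_{AC}}$. The key observation is that the simulator $\sigma^{Qline_{AC}}$ was \emph{constructed} precisely to emulate the outer interface of $\QL_{AC}$ by internally running a copy of the Qline protocol (with $PSK'_A = PSK'_C = PSK'_B$) and exposing its outer interface. Thus the two systems produce the elements of $H$ by essentially the same mechanism, and the argument reduces to checking that this emulation is faithful on each component and that the coupling of the abort/blocking behaviour between $\K$ and the simulated Qline does not introduce any discrepancy.

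First I would partition $H$ into its natural pieces: the pre-shared key inputs $\{\tilde{PSK}_A, \tilde{PSK}_C, \tilde{PSK}_B\}$, the quantum outputs on the channel $\{\tilde{\rho}_0,\tilde{\rho}_1,\tilde{\rho}_2,\tilde{\rho}_3\}$, the classical transcript copies $\{\tilde t_0, \tilde t_1\}$, and the blocking levers $\{\tilde l_0, \tilde l_1\}$. For the $PSK$ inputs, these are fed by the distinguisher directly into the outer/user interfaces and are simply forwarded identically in both worlds, so independence is immediate. For the quantum states and transcripts, I would invoke the fact that $\sigma^{Qline_{AC}}$ runs a genuine internal instance of $\QL_{AC}$: by construction the states $\tilde\rho_j$ and transcript copies $\tilde t_k$ seen at the outer interface in the ideal world are generated by the same code as in the real world, hence have identical joint distribution conditioned on the distinguisher's inputs. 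The blocking levers $\tilde l_0, \tilde l_1$ act on the authenticated channels in exactly the same way in both systems, since the simulator passes them through to its internal Qline.

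The main obstacle is the \emph{abort coupling}: I must verify that the event ``$\S$ terminates successfully and outputs keys'' versus ``$\S$ aborts'' is triggered by the same condition on $H$ in both worlds, so that $H$ does not leak which world we are in via correlations with the abort flag. Here I would use the two additional behaviours built into $\sigma^{Qline}$: if the simulated Qline aborts, the simulator pulls the lever $l$ of $\K$ so that $\K$ aborts too; and if $\K$ aborts because of mismatched $PSK$ inputs ($s=0$), the simulator feeds mismatched keys into its internal Qline so that it aborts as well. I would argue that these two rules make the abort events of $\K\circ\sigma^{Qline_{AC}}$ and of $\QL_{AC}$ coincide exactly as functions of the distinguisher's inputs, so that the joint behaviour of all elements of $H$ together with the abort decision is identical. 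Since $\K$ reveals nothing on its outer interface beyond the size $s$ and the lever $l$ — both of which the simulator matches to the real Qline — no further information distinguishes the two systems on $H$. Assembling these observations shows that the distribution of $H$ is independent of the choice of $\S$, completing the proof.
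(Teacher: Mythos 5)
Your proposal is correct and follows essentially the same route as the paper's proof: both arguments observe that in either world the elements of $H$ are produced by the same Qline machinery (since $\sigma^{Qline_{AC}}$ runs a genuine internal copy of $\QL_{AC}$), and that the only differing inputs --- the pre-shared keys --- influence the system solely through their equality pattern, which the simulator's coupling rules ($l$ pulled on internal abort, mismatched $PSK'$ when $s=0$) preserve exactly. The paper makes one point slightly more explicit than you do, namely that the actual \emph{values} of the pre-shared keys never influence the protocol's outputs (only their equality relation does), but this is implicit in your appeal to the simulated Qline running the same code combined with your abort-coupling argument.
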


\begin{proof}

Whatever $\S$ is, the outputs $\{\splitatcommas{\tilde{\rho}_0, \tilde{\rho}_2, \tilde{t}_0, \tilde{t}_1}\}$ come from the same system (i.e. $\QL_{AC}$) which is given the following inputs: 
\begin{itemize}
    \item $\{\splitatcommas{\tilde{PSK}_A, \tilde{PSK}_C, \tilde{PSK}_B, \tilde{\rho}_1, \tilde{\rho}_3, \tilde{l}_0, \tilde{l}_1}\}$ if $\S$ is $\QL_{AC}$
    \strut{}
    \item $\{\splitatcommas{PSK'_1, PSK'_2, PSK'_3, \tilde{\rho}_1, \tilde{\rho}_3, \tilde{l}_0, \tilde{l}_1}\}$ if $\S$ is $\K \circ \sigma^{Qline_{AC}}$
\end{itemize}

\noindent The construction of the simulator guarantees that
$$
\tilde{PSK}_A = \tilde{PSK}_C = \tilde{PSK}_B \Longleftrightarrow PSK'_A = PSK'_C = PSK'_B 
$$
Furthermore, the actual value of the pre-shared key inputs does not influence the protocol in any way and is thus fully independent of the outputs. The only relation that does have an influence on the protocol is their equality. If $PSK_A = PSK_B = PSK_C$, then $PSK'_A = PSK'_B = PSK'_C$ and both protocol successfully proceed. Otherwise, both systems abort, returning identical outputs.

Finally, regardless of what $\S$ is, the outputs $\{\splitatcommas{\tilde{\rho}_0, \tilde{\rho}_2, \tilde{t}_0, \tilde{t}_1}\}$ are those of the same system, which is given inputs that are either identical or that identically influence the system. 
To conclude, $H$ is fully independent of what $\S$ is. \qed
    
\end{proof}

% The situations where $\QL_{AC}$ aborts are the following:
% \begin{itemize}
%     \item The three pre-shared keys inputs of the user interfaces are not all equal.
%     \item communication on an authenticated channel $\A$ has been blocked at some point.
%     \item The estimated qubit error rate is above the threshold 
% \end{itemize}
In particular, lemma~\ref{lem:H&S-indep} implies that the conditions for $\S$ to abort are independent of whether $\S$ is $\QL_{AC}$ or $\K\circ\sigma^{Qline_{AC}}$. In such conditions, the main key outputs are set to ``$\bot$'' and the distinguisher $\D$ thus holds no information at all on what $\S$ is.
In the following, we consider that $\S$ successfully terminates, producing actual keys at the main outputs $\tilde{K}_A$ and $\tilde{K}_C$

\begin{lemma} \label{lem:G-indep}
From the point of view of the distinguisher, except with probability $\epsilon$ for some $\epsilon>0$, the set of inputs and outputs $G = \splitatcommas{\{\tilde{K}_A, \tilde{K}_C}\}$ is independent of both the set $H$, and of whether $\S$ is $\QL_{AC}$ or $\K\circ\sigma^{Qline_{AC}}$
\end{lemma}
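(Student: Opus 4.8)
The plan is to treat the two possibilities for $\S$ separately and show that in each case the pair $G=\{\tilde{K}_A,\tilde{K}_C\}$ is (up to probability $\epsilon$) a pair of equal, uniformly random strings of size $S$ that is independent of $H$. Since this characterisation makes no reference to which system $\S$ is, it will follow that $G$ carries no information distinguishing $\QL_{AC}$ from $\K\circ\sigma^{Qline_{AC}}$. Throughout I restrict to the terminating branch isolated before the lemma, where $\S$ actually outputs keys rather than ``$\bot$''; by Lemma~\ref{lem:H&S-indep} the event of termination is itself independent of which system $\S$ is, so conditioning on it introduces no dependence on the identity of $\S$.

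First I would dispatch the ideal case. When $\S=\K\circ\sigma^{Qline_{AC}}$ and the system terminates, the construction of $\K$ forces the three pre-shared inputs to be equal, so $\K$ returns $\tilde{K}_A=\tilde{K}_C$ equal to a genuinely uniform string of length $S$. Because the only data $\K$ exposes at its outer interface is the size $s$ and the blocking lever $l$, and the simulator's outer interface $H$ is produced solely from that outer interface together with the internally simulated Qline, the uniform key $\K$ generates is information-theoretically independent of everything in $H$. Hence in the ideal case $G$ is exactly a pair of equal uniform strings independent of $H$, with no error term.

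The real case is where QKD security is imported. By Lemma~\ref{lemma:charlie-is-bob}, once $\pi_B^{Qline_{AC}}$ discloses its key, the system $\QL_{AC}$ realises a prepare-and-measure QKD protocol in which Alice plays the sender and the combined system $\pi_C^{Qline_{AC}}(\Q||\A)\pi_B^{Qline_{AC}}$ plays the receiver, and this protocol satisfies Assumptions~\ref{assump:general},~\ref{assump:Alice}~and~\ref{assump:Bob}. Therefore the security theorem of~\cite{tomamichel2017largely} applies, and the protocol is $\epsilon$-secure in the sense of abstract cryptography: except with probability $\epsilon$, the output keys are indistinguishable from equal, uniformly random strings of size $S$ that are independent of the entire eavesdropper view. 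The essential bookkeeping step is to check that this eavesdropper view coincides with $H$ --- concretely the channel outputs $\{\tilde{\rho}_0,\tilde{\rho}_2,\tilde{t}_0,\tilde{t}_1\}$ produced from the adversarial inputs $\{\tilde{\rho}_1,\tilde{\rho}_3,\tilde{l}_0,\tilde{l}_1\}$ identified in Lemma~\ref{lem:H&S-indep} --- while the remaining elements of $H$, the pre-shared key inputs, were already shown there not to influence the outputs. Hence QKD security yields independence of $G$ from all of $H$ in the real case, up to $\epsilon$.

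Combining the two cases, in the ideal case $G$ is perfectly a pair of equal uniform strings independent of $H$, and in the real case it is $\epsilon$-close to the same distribution; by the triangle inequality for the distinguishing pseudo-metric the two are $\epsilon$-indistinguishable, and in particular $G$ is independent of both $H$ and of the identity of $\S$ except with probability $\epsilon$. I expect the main obstacle to be the reduction in the real case: one must argue cleanly that disclosing Bob's measurement outcomes to Charlie turns the tripartite Qline into a bona fide bipartite QKD instance to which the black-box theorem of~\cite{tomamichel2017largely} applies --- exactly the content secured by Lemma~\ref{lemma:charlie-is-bob} --- and that the QKD notion of an eavesdropper's information is faithfully represented by the outer-interface data collected in $H$. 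Once this correspondence is pinned down, matching the ideal QKD key to the output of $\K$ is routine.
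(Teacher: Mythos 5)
Your proposal is correct and follows essentially the same route as the paper's own proof: the ideal case is dispatched by noting the keys come directly from $\K$ and are therefore uniform and independent of the outer interface, and the real case invokes Lemma~\ref{lemma:charlie-is-bob} to satisfy the assumptions of~\cite{tomamichel2017largely}, whose $\epsilon$-security notion gives equal uniform keys independent of the distinguisher's view, which contains $H$. Your extra bookkeeping (conditioning on termination via Lemma~\ref{lem:H&S-indep}, and explicitly matching the QKD eavesdropper view to $H$) only makes explicit steps the paper leaves implicit.
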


\begin{proof}
    If $\S = \K\circ\sigma^{Qline_{AC}}$, then the keys $\tilde{K}_A$ and $ \tilde{K}_C$ come from $\K$ and are thus uniformly random. We show that the same applies if $\S = \QL_{AC}$. More precisely, we show that from the point of view of the distinguisher, except with probability $\epsilon$ for some $\epsilon>0$, $G$ appears uniformly random and fully independent of $\S$ and of the other inputs and outputs $H$.
    Suppose that 
    \[\S = \QL_{AC} = \pi_A^{Qline_{AC}}(\Q||\A)\pi_C^{Qline_{AC}}(\Q||\A)\pi_B^{Qline_{AC}}\]
    Lemma~\ref{lemma:charlie-is-bob} guarantees that the states produced by $\S$ between
    $\pi_A^{Qline_{AC}}$ and $\pi_C^{Qline_{AC}}$
    are indistinguishable from QKD states, and that $\pi_C^{Qline_{AC}}(\Q||\A)\pi_B^{Qline_{AC}}$.  satisfies Assumption~\ref{assump:Bob}. By application of~\cite{tomamichel2017largely}, the key establishment protocol is $\epsilon$-secure.
    
    The definition of $\epsilon$-security from~\cite{tomamichel2017largely} implies that, from the point of view of the distinguisher, except with probability $\epsilon$, the final keys $\tilde{K}_A$ and $\tilde{K}_C$ after post-processing are equal uniformly random bit strings. In particular they are completely independent of the total classical-quantum state held by the distinguisher, which includes $H$ the inputs and outputs of $\S$. This concludes the proof.\qed
\end{proof}

We are now ready to prove Equation~\ref{eq:MainAC}.
We want to show that the set $G\cup H$ of inputs and outputs of $\S$ is independent of whether $\S$ is $\QL_{AC}$ or $\K\circ\sigma^{Qline_{AC}}$.
Let $g$ and $h$ be some values of $G$ and $H$ respectively. We have
\begin{equation} \label{eq:probas1}
\begin{split}
Pr[G=g, H=h, \S=\QL_{AC}] & = Pr[G=g | (H=h, \S=\QL_{AC}) ]\\
& \times Pr[H=h, \S=\QL_{AC}]
\end{split}
\end{equation}
Lemma~\ref{lem:H&S-indep} implies that 
\[
Pr[H=h, \S=\QL_{AC}] = Pr[H=h] \times Pr[\S=\QL_{AC}]
\]
Lemma~\ref{lem:G-indep} implies that, from the point of view of the distinguisher, except with probability $\epsilon$, 
\[
Pr[G=g | (H=h, \S=\QL_{AC}) ] = Pr[G=g]
\]
Equation~\ref{eq:probas1} then becomes
\begin{equation} \label{eq:probas2}
Pr[G=g, H=h, \S=\QL_{AC}] =_\epsilon Pr[G=g] \times Pr[H=h] \times Pr[\S=\QL_{AC}]
\end{equation}
where $=_\epsilon$ means that the equality holds except with probability $\epsilon$.
Finally, applying lemma~\ref{lem:G-indep} to Equation~\ref{eq:probas2}, we get that from the point of view of the distinguisher
\begin{equation} \label{eq:probas3}
Pr[G=g, H=h, \S=\QL_{AC}] =_\epsilon Pr[G=g, H=h] \times Pr[\S=\QL_{AC}]
\end{equation}

Because Equation~\ref{eq:probas3} holds for any value $g$ and $h$, this shows that the set $G\cup H$ of inputs and outputs of $\S$ is independent of what $\S$ is.
This gives the desired Equation~\ref{eq:MainAC}.

The proof of Equation~\ref{eq:MainCB}, is similar to the one presented above with the only differences that: 1) $\S$ is instead either $\QL_{CB}$ or $\K\circ\sigma^{Qline_{CB}}$, and 2)
%instead of $\QL_{AC}$ or $\K\circ\sigma^{Qline_{AC}}$, and that
Lemma~\ref{lemma:charlie-is-bob} should be replaced with Lemma~\ref{lemma:charlie-is-alice}. The proof is then similar, and the conclusion is Equation~\ref{eq:MainCB}.

Equation~\ref{eq:MainAB} is more straightforward than the two others as by definition,  $\pi_A^{Qline_{AB}}$ and $\pi_B^{Qline_{AB}}$ respectively satisfy
Assumption~\ref{assump:Alice} about Alice's preparation
and Assumption~\ref{assump:Bob} about Bob's measurement. 
As for equations~\ref{eq:MainAC} and~\ref{eq:MainCB},
the proof can be obtained by considering a system $\S$ that is either $\QL_{AB}$ or $\K\circ\sigma^{Qline_{AB}}$. The conclusion is Equation~\ref{eq:MainAB}.

As a conclusion, Equations~\ref{eq:MainAC},~\ref{eq:MainCB} and~\ref{eq:MainAB} respectively show that $\QL_{AC}$, $\QL_{CB}$ and $\QL_{AB}$ are each $\epsilon$-secure, in the sense that they securely realize the ideal functionality defined by $\K$.
Moreover, the composition theorem of~\cite{AbstractCrypto} implies that the composition of an $\epsilon$-secure and an $\epsilon'$-secure system is $(\epsilon + \epsilon')$-secure.
As the key establishment protocol on Qline amounts to a  parallel composition of $\QL_{AC}$, $\QL_{CB}$ and $\QL_{AB}$, it is $3\epsilon$-secure.
%in the sense that it securely realizes the ideal functionality defined by a parallel composition of three copies of $\K$.
This concludes the proof of Theorem~\ref{th:qline-security}. \qed 
\vspace{1em}

Before concluding, we can hint at how the work presented here scales to more than one Charlie node between Alice and Bob.
As mentioned earlier, the extension of the protocol is trivial. 
Assuming that there are $\ell$ parties in total,
Each intermediate node applies the same protocol as Charlie in the protocol presented here. At the end, each party has a key $K_i$, $i \in [1,\ldots, \ell]$, such that $\bigoplus_i K_i = 0$. It then suffices that all parties except two reveal their keys in order for the two remaining ones to establish a shared key.

Regarding the proof, we also need to consider the additional case in which two intermediate Charlies establish a shared key. The structure of the proof is then similar, with the exception that the security of the state obtained after the key reveal is proved using both Lemma~\ref{lemma:charlie-is-bob} and Lemma~\ref{lemma:charlie-is-alice}. The rest of the proof follows easily.

\section{Conclusion}

We have analysed a key establishment protocol on Qline and proved its security
in the abstract cryptography framework. The closeness between QKD and Qline allowed us to largely reuse the proof technique from~\cite{tomamichel2017largely} for each subprotocols, and combine them
using the composition theorem from~\cite{AbstractCrypto}.

This proof technique implies that, in addition to the security, we get the composability of the security of Qline. In particular, it shows that the key establishment over Qline can be combined with classical authentication and the one-time pad to design a complete secure cryptosystem. It can also be connected to standard QKD architecture to build large-scale quantum network architectures. Finally, it can be combined with classical cryptosystems, as exploited by the Muckle protocol~\cite{DHP20}.

The key establishment protocol derives from a well-known quantum-secret sharing protocol.
Interestingly, the secret sharing protocol was first introduced using multiparty entangled states.
It has then been shown that the correlation obtained after measuring this state could be reproduced by sending a single-qubit state to the parties on a line. Furthermore, a similar architecture has been shown to allow other secure multiparty computing tasks. This seems to indicate that the Qline architecture could be well-suited to execute more protocols than those allowed on QKD networks.

In the shorter term, it seems reasonable to interconnect Qline with large-scale quantum networks.
Compared to standard QKD networks, Qline offers a different tradeoff between security and efficiency. In QKD, the distance between two nodes is limited to approximately 100km. In comparison, this limit applies to the whole Qline. Therefore, the purpose of Qline is not to increase the distance but rather to increase the connectivity of the network.

Another interesting advantage of Qline is the absence of trusted nodes. Trusted nodes are a standard technique to increase the distance of key exchange in QKD networks. These intermediate nodes are used to route the keys to distant parties, and thus get a clear view of the keys they are routing.
This makes these nodes a weak point of the network which needs to be protected with physical security.

In critical infrastructures, such as telecom hubs, most nodes are already highly secured, which makes
them well-trusted nodes. However, when considering end-users of key exchange, imposing physical security on intermediate nodes can be problematic. This makes Qline a good solution to implement the last mile of quantum networks.
By formally proving the composable security of key exchange over Qline, we expand the range of potential end-users of quantum key exchange technologies.

For our proof, we've only considered the simple mathematical abstraction in which parties can only send qubits. It is well known that real QKD systems using attenuated lasers usually consider higher dimension systems, which opens the way the side-channel attacks. In our case, the Charlies can not distinguish between applying their transforms to one or multiple photons. This allows
eavesdropper to perform a Trojan-horse attack in which they inject photons into Charlie and recover them deterministically to learn the transform that was applied.

Many techniques have been developed to protect QKD from side-channel attacks. In particular, differential phase shift or decoy state protocols precisely aim at preventing the use of this discrepancy between the mathematical model of QKD and its physical implementation.
In a separate work, we will show how these can be applied to Qline to protect it against side-channel attacks. More information
about protection against side-channels on Charlie can be found in VeriQloud's patent~\cite{VQpatent}.

\section{Acknowledgements}
All authors acknowledge extensive discussion with Joshua Nunn, Scientific Advisor of VeriQloud.

\bibliographystyle{ieeetr}
\bibliography{MyRef.bib}

\end{document}